\DeclarePairedDelimiter{\ceil}{\lceil}{\rceil}
\newcommand{\bX}{{\bf X}}
\newcommand{\bY}{{\bf Y}}
\newcommand{\bZ}{{\bf Z}}
\newcommand{\So}{\Sigma_{0}}
\newcommand{\Sa}{\Sigma_{1}}
\newcommand{\Sk}{\Sigma_{k}}
\newcommand{\bmu}{\boldsymbol{\mu}}
\newcommand{\Exp}{\mathbb{E}}
\newcommand{\Prob}{\mathbb{P}}
\newcommand{\tr}{\mathrm{tr}}
\newcommand{\LM}{\lambda_{\max}}
\newcommand{\Lm}{\lambda_{\min}}
\newcommand{\Var}{\mathrm{Var}}
\newcommand{\kp}{k^{\prime}}
\newcommand{\delhatrpe}{\widehat{D}^{\mathrm{RPE}}}
\newcommand{\delhatrp}{\widehat{D}^{\mathrm{RP}}}
\newlength{\dhatheight}
\newcommand{\Ao}{\hyperlink{A2}{A2}}
\newcommand{\At}{\hyperlink{A1}{A1}}
\newcommand{\Ath}{\hyperlink{A3}{A3}}
\newcommand{\Af}{\hyperlink{A4}{A4}}
\newtheorem{theorem}{Theorem}
\newtheorem{lemma}{Lemma}
\newtheorem{example}{Example}
\newtheorem{remark}{Remark}
\title{\bf Ultrahigh-dimensional Quadratic Discriminant Analysis Using Random Projections}
\author{Annesha Deb$^1$, Minerva Mukhopadhyay$^{1,2}$ and Subhajit Dutta$^{1,3}$}
\date{\today}
\begin{document}
\maketitle

\noindent
{\it $^{1}$Department of Mathematics and Statistics,
Indian Institute of Technology Kanpur, Kanpur - 208016, UP, India.}\\
{\it $^2$Interdisciplinary Statistical Research Unit, Indian Statistical Institute, 203 B. T. Road, Kolkata – 700108, WB, India.}\\
{\it $^3$Applied Statistics Unit,
Indian Statistical Institute, 203 B. T. Road, Kolkata – 700108, WB,~India.}

\begin{abstract}
This paper investigates the effectiveness of using the Random Projection Ensemble (RPE) approach in Quadratic Discriminant Analysis (QDA) for ultrahigh-dimensional classification problems. Classical methods such as Linear Discriminant Analysis (LDA) and QDA are used widely, but face significant challenges in their implementation when the data dimension (say, $p$) exceeds the sample size (say, $n$). In particular, both LDA (using the Moore-Penrose inverse for covariance matrices) and QDA (even with known covariance matrices) may perform as poorly as random guessing when $p/n \to \infty$ as $n \to \infty$. The RPE method, known for addressing the curse of dimensionality, offers a fast and effective solution without relying on selective summary measures of the competing distributions. This paper demonstrates the practical advantages of employing RPE on QDA in terms of classification performance as well as computational efficiency. We establish results for limiting perfect classification in both the population and sample versions of the proposed RPE-QDA classifier, under fairly general assumptions that allow for sub-exponential growth of $p$ relative to $n$. Several simulated and gene expression data sets are analyzed to evaluate the performance of the proposed classifier in ultrahigh-dimensional~scenarios.

\paragraph{Keywords:} Computational complexity, Kullback-Leibler divergence,  Misclassification probability, Perfect classification, Probabilistic bounds.

\end{abstract}

\section{Introduction}

Classification techniques are built using labeled data that contain multiple predictor variables along with their corresponding class labels. The data can be represented as $(\bm{X}, y)$, where $\bm{X}$ denotes the $p$-dimensional predictor variable and $y$ denotes the corresponding class label. The broad goal of discriminant analysis is to construct a class boundary in this $p$-dimensional space that optimally separates the competing populations.
To fix ideas, let us consider the gene expression data for $72$ Leukemia patients (available \sloppy{\href{https://file.biolab.si/biolab/supp/bi-cancer/projections/info/MLL.html}{here}})
with three types of Leukemia, each associated with $12533$ gene expressions. The primary objective is to classify a patient into one of these three types based on the gene expressions from the patient (a detailed analysis of this dataset has been provided later in Section \ref{sec:realdata-analysis}).

In high-dimensional settings, most existing classical classification methods encounter significant challenges. We either get a sub-optimal value for the misclassification probability (see, e.g., \cite{bickel2004some, li2015sparse}), or have to deal with excessive computational costs (which may even be infeasible in some scenarios). In particular, classical $\text{LDA}$ and $\text{QDA}$ involve the inverse of sample covariance matrices, which are singular in this setup as $p>n$. 
In this paper, we focus on a computationally efficient version of QDA, where given a class label $y$, the $p$-dimensional predictor $\bm{X}$ is assumed to follow a Gaussian distribution with different means and covariance matrices. 

Several approaches have been proposed in the literature for high-dimensional QDA which replace various functions of the unknown mean vectors and covariance matrices of the competing populations with their sample estimates.
As noted by \cite{friedman1989regularized}, sample estimates of covariance matrices lead to biased estimates of their eigenvalues when the dimension grows with the sample size (i.e., $p \sim n$) and this may lead to inaccurate classification results.
To mitigate this problem, the author proposed regularized discriminant analysis, where the sample estimates of covariance matrices were replaced with their regularized variants.
\cite{dudoit2002comparison} utilized a diagonal discriminant approach to analyze high-dimensional gene expression data, specifically to address the inversion problem. 
\cite{bouveyron2007high} introduced a parameterization of the covariance matrices based on the assumption that high-dimensional data essentially lies in low-dimensional subspaces. In a similar approach, \cite{wu2019quadratic} implemented QDA assuming a simplified structure of the covariance~matrices. 

Under suitable sparsity assumptions on the covariance matrices and differences of means, \cite{li2015sparse} developed a sparse QDA (SQDA) classifier and derived its asymptotic optimality. Subsequently, \cite{fan2015innovated} proposed a two-step procedure called IIS-SQDA, which incorporates an innovative interactive screening (IIS) to reduce the dimensionality prior to applying SQDA. Assuming sparsity, \cite{jiang2018direct} formulated a direct approach to QDA (DA-QDA), by estimating the key quantities in the discriminant function of QDA instead of the population parameters separately. The authors further established convergence of the misclassification probability of DA-QDA to the optimal Bayes~risk.
Assuming a strongly spiked covariance model, \cite{AoYa2014, aoshima2019distance} proposed a distance based QDA~classifier.

In the context of dimension reduction, random projection (RP) is widely recognized as an efficient tool which is supported with a sound theoretical foundation. The main motivation stems from the well-known Johnson-Lindenstrauss (JL) lemma \citep{johnsonlemma} and its randomized variants (see, e.g., \cite{dasgupta2003elementary}). The JL lemma essentially states that random linear projection of two high-dimensional points onto an appropriate low-dimensional subspace preserves their interpoint distance with high probability, or in expectation. This property makes RP particularly effective for high-dimensional inference problems, where the preservation of interpoint distances is crucial.
In practice, high-dimensional data points are first projected onto a low-dimensional subspace using a suitable random matrix and subsequently, inference is performed on these projected samples. 
To limit the subjectivity of a single random matrix, the random projection ensemble (RPE) method aggregates results from multiple RPs.

RP and RPE have been successfully implemented in kNN classification (see, e.g, \cite{deegalla2006reducing, yan2019k}) and clustering (\cite{dasgupta1999learning, fern2003random, heckel2017dimensionality}), LDA and QDA based classification (\cite{durrant2013sharp, durrant2013random}), hypothesis testing (\cite{lopes2011more, RAPTT2016, ayyala2022covariance}), estimation of the precision matrix (\cite{marzetta2011random}), regression (\cite{klanke2008library, mukhopadhyay2020targeted, ahfock2021statistical}), sparse principal component analysis~\citep{gataric2020sparse}, etc. RP(E) techniques are quite popular due to their ease of application and significantly low computational cost.

In this paper, we explore the application of RPE in QDA for ultrahigh-dimensional settings. 
The primary contributions of this paper, along with their potential novelty relative to earlier work, is outlined in the following subsection.

\subsection{Our Contributions}

In this paper, we introduce an RPE-based QDA classifier, referred to as RPE-QDA. The method generates \( B \) random matrices of dimension \( d \times p \) from a suitable distribution with \( d \ll p \) and \( d < n \). Using each of these random matrices, the training data is projected onto a random \( d \)-dimensional subspace and a random QDA discriminant function is constructed in this subspace. The classification is performed based on an aggregation of $B$ random discriminant functions called the RPE-QDA discriminant function. A detailed description of RPE-QDA is provided in Section \ref{sec:RPE}.

Under a set of reasonable assumptions on the means and covariance matrices of the competing populations (see Section \ref{assumptions}), we establish \textit{perfect classification} (i.e., the misclassification probability approaches zero as \( p \to \infty \)) for the proposed RPE-QDA classifier. These assumptions are sufficiently flexible to accommodate sub-exponential growth of \( p \) relative to \( n \) (i.e., \( p \sim \exp\{o(n)\} \)), no mean difference among the competing distributions and various covariance structures including the spiked \citep{johnstone2001distribution, aoshima2019distance} as well as identity-type matrices. 

To provide further insight to our method, we mathematically quantify the loss in employing the RPE in QDA compared to the classical QDA. 
Under appropriate assumptions, we establish an equivalence between the classical QDA discriminant function and the Kullback–Leibler (KL) divergence. 
In contrast, the RPE-QDA discriminant function is shown to be equivalent to a scaled version of the KL divergence, with a scaling factor of $d/p$ (see Section \ref{PC_RPE-QDA} for details).
This reduction in scale introduced by the RPE technique is traded off by an improved computational efficiency.
We illustrate the performance of RPE-QDA through extensive simulations (see Section \ref{sec:simulation}) demonstrating its applicability
in a wide class of classification problems. 
Finally, we corroborate the proposed method by analyzing several multi-class gene expression data sets in Section \ref{sec:realdata-analysis}.  

\vskip5pt
\noindent{\it Related work.}
\cite{cannings2017random} proposed a general RPE based classification method
and implemented the same on QDA, along with LDA and kNN classifiers. The approach of this paper differs from ours in two aspects. First, in the choice of random matrices, where they employed a cross-validation step to select a few \emph{good} matrices from a large collection of random matrices, but we simply simulate random matrices from appropriate distributions. Secondly, they aggregate on the decisions in the ensemble, while we aggregate the discriminant functions. 
From a theoretical perspective, they provide an upper bound on the difference between the expected test error of their proposed RPE classifier and the Bayes risk. 
However, this bound could not be stated explicitly for QDA.

On a related note, \cite{palias2023effect} derived an upper bound on the Bayes error of QDA under various projection schemes. This bound depends on the difference in class means and certain summary measures of the covariance matrices such as their traces and largest eigenvalues. When these quantities are identical across the two populations, the bound becomes uninformative (see the comparison with existing methods in Section \ref{PC_RPE-QDA}), although the RP-QDA method continues to perform well in practice.
Moreover, their theoretical results neither include the ensemble approach nor account for the sample version of the QDA~classifier.


\subsection{Notations} We now list down the key notations and conventions used in the paper below.
 \vspace{-.1 in}
\begin{enumerate}
\itemsep-0.5em
\item Let $a$ and $b$ be two real numbers, then $a\wedge b=\max\{a,b\}$ and $a\vee b= \min\{a,b\}$.


\item Let $\{a_{n}\}$ and $\{b_{n}\}$ be sequences of real numbers, then $a_{n} \gg b_{n}$ implies $a_{n}/b_{n} \to \infty$,  and $a_{n} \lesssim b_{n}$ (equivalently, $b_{n}\gtrsim a_{n}$) implies that $a_{n}\leq c b_{n}$ for all sufficiently large $n$ for some $c>0$. Further, $a_{n} \sim b_{n}$ implies that both $a_{n} \lesssim b_{n}$ and $b_{n} \lesssim a_{n}$ hold.

\item If $\bm{X}$ is a $p$-dimensional  random vector, then $\bm{X} \sim Q$ (here $Q$ is a probability distribution) implies $\bm{X}$ is distributed as~$Q$.

\item Let ${A}$ be a square matrix. Then, $\lambda_{\min}({A})$ and $\lambda_{\max}({A})$ denote the minimum and maximum eigenvalues of ${A}$, respectively. 
\end{enumerate}

\vspace*{-0.2in}
\section{The Random Projection Approach for QDA}

Consider the problem of classifying a random variable $\bZ \in \mathbb{R}^{p}$ to one of $J$ possible classes, and $y$ denotes its class label with $y \in \{1, \ldots, J\}$. 
Let the $k$-th class be associated with the population $P_{k}$ for $k = 1, \ldots, J$.
Quadratic discriminant analysis (QDA) assumes the conditional distribution of $\bZ$ given $y$ as Gaussian, with the following specifications
$$\bZ \mid y=k \sim N_p(\bmu_{k}, \Sigma_{k}),$$
where $\bmu_{k} \in \mathbb{R}^{p}$ and $\Sigma_{k}$ is a $p \times p$ real positive definite matrix for $k = 1, \ldots, J$.
In other words, the probability distribution of the $k$-th population $P_{k}$ is $N_{p}(\bmu_{k}, \Sk)$ for $k = 1, \ldots, J$.
Unlike linear discriminant analysis (LDA), which assumes a fixed covariance matrix for all the $J$ populations, QDA allows both the parameters $\bmu_{k}$ and $\Sk$ to vary across populations. 

Fix $k_{0} \in \{ 1, \ldots, J\}$.
The Bayes decision rule (say, $\delta^B(\bZ)$) is given~by
\begin{equation} \delta^B(\bZ) = 
    k_{0} ~~ \text{if} \quad  {\arg\max}_{k \in \{1, \ldots, J\} }  \pi_{k} f_{k}(\bZ) = k_{0}, \label{eqn:Classif1}  
\end{equation}
where $\pi_{k} = \Prob (\bZ \in P_k)$ is the prior of occurrence of the $k$-th population with $\sum_{k} \pi_{k} = 1$, and $f_{k}$ is the density of the $k$-th population with $k \in \{1, \ldots, J\}$. 
Under Gaussianity, the Bayes decision rule associated with QDA can be equivalently expressed as follows:
\begin{equation} 
\delta^{\mathrm{QDA}}(\bZ) = 
    k_{0}  ~~ \text{if} ~  D_{k_{0}, k }(\bZ) > 0 ~~ \text{for all} ~~  k \in \{ 1, \ldots, J\} \setminus \{k_{0}\}.
    \label{eqn:Classif2}  
\end{equation} 
Here, $D_{\kp, k }(\bZ)$ is the log discriminant function of classes $k^{\prime}$ and $k$, which can be expressed~as
\begin{eqnarray}
    D_{\kp, k}(\bZ) = \log \left(\frac{\pi_{\kp}}{\pi_{k}} \right) - \frac{1}{2} \log \left( \frac{\det(\Sigma_{\kp})}{\det(\Sk)} \right) - \frac{1}{2} \left(\bZ - \bmu_{\kp} \right)^{\top} \Sigma_{\kp}^{-1} \left(\bZ - \bmu_{\kp} \right) \notag 
    \\ + ~\frac{1}{2} \left(\bZ - \bmu_{k} \right)^{\top} \Sk^{-1} \left(\bZ - \bmu_{k} \right), \label{eqn:PDF} 
\end{eqnarray}
for $\kp, k \in \{1, \ldots, J\}$. The misclassification probability  corresponding to any generic classifier $\delta$ is defined as
\begin{equation}
\Delta = \sum_{k = 1}^{J} \pi_{k} \Prob \left(\delta(\bZ) \neq k \mid \bZ \in P_{k} \right). 
\label{eqn:MP_population}
\end{equation} 
We denote the misclassification probability corresponding to the Bayes classifier associated with QDA as $\Delta^{\mathrm{QDA}}$.

In practice, the population parameters $\pi_{k}$, $\bmu_{k}$ and $\Sk$ are unknown and estimated using the training samples. 
Typically, we are provided with $n = \sum_{k} n_{k}$ labeled training samples, denoted by $(\bX_{i}, y_{i})$ for $i= 1, \ldots, n$. Without loss of generality, we assume that the sample indices are ordered population-wise so that $\{\bX_{N_{k-1} + 1}, \ldots, \bX_{N_{k}} \}$ is a random sample of size $n_{k}$ from population $P_{k}$, where $N_{k}= \sum_{l = 1}^{k} n_{l}$ for $k = 1, \ldots, J$ and $N_{0} = 0$.
Based on the training sample, one estimates the parameters $\pi_{k}$, $\bmu_{k}$ and $\Sk$ as follows:
\begin{eqnarray}
    \widehat{\pi}_{k} = \frac{n_{k}}{n}, \quad \widehat{\bmu}_{k} =\frac{1}{n_{k}} \sum_{i=N_{k-1} + 1}^{N_{k}} \bX_{i}, \quad \text{and} \quad \widehat{\Sigma}_{k} = \frac{1}{n_{k} - 1} \sum_{i=N_{k-1} + 1}^{N_{k}} \left(\bX_{i} - \widehat{\bmu}_{k} \right)\left(\bX_{i} - \widehat{\bmu}_{k} \right)^{\top}
    \label{eqn:estimates}
\end{eqnarray} 
for $k = 1, \ldots, J$.

Fix $k_{0} \in \{ 1, \ldots, J\}$. The estimated discriminant function $\widehat{D}_{\kp, k}$ can be obtained by replacing the parameters by their estimates in $D_{\kp, k}$ as given by equation (\ref{eqn:PDF}). The estimated QDA classifier is defined as
\begin{equation} \delta^{\mathrm{QDA}}_n(\bZ) = 
    k_{0}  ~~ \text{if} ~  \widehat{D}_{k_{0}, k }(\bZ) > 0 ~~ \text{for all} ~~  k \in \{ 1, \ldots, J\} \setminus \{k_{0}\}. \label{eqn:Classif_Est} 
\end{equation}

When $p >n_{\max} \coloneqq \vee_{k} n_{k}$, the sample variance-covariance matrices $\widehat{\Sigma}_{k}$ for $k = 1, \ldots, J$ are not invertible. 
Use of generalized inverses may lead to sub-optimal results (see, e.g., \cite{bickel2004some}). Even if $p < n_{\min} \coloneqq \wedge_{k} n_{k}$, the computational complexity of QDA is of the order $O\left(n_{\max} p ^{2} \vee p^{3}\right)$ which is prohibitive for large $p$. One popular way to improve accuracy in the high-dimensional scenario (i.e., when $p \gg n)$ is to employ a dimension reduction technique prior to performing QDA. A convenient and computationally inexpensive approach towards dimension reduction is through the random projection ensemble (RPE), which we describe in the next subsection. 

\subsection{Random Projection Ensemble on QDA } \label{sec:RPE}

Let $R = ((r_{ij}))$ be a random matrix of dimension $d \times p$, where each component $r_{ij}$ is independent and identically distributed (i.i.d.) from $G$ for $i = 1, \ldots, d$, $j = 1, \ldots, p$ and $d \ll p$. Given $R$, the projected random variable $R \bZ$ belongs to one of the $J$ populations in the projected space (say, $P_{k}^{R}$) with mean $R\bmu_k$ and covariance matrix $R\Sk R^{\top}$ for $k = 1, \ldots, J$. If $\bZ$ is closer to the population $P_k$ in the $p$-dimensional ambient space, then one expects $R\bZ$ to be close to the population $P_k^{R}$ in $\mathbb{R}^{d}$ as well. 
The discriminant function on this projected subspace (say, $D_{\kp, k}^{R}$) 
can be obtained by replacing $\bZ$, $\bmu_{k}$ and $\Sk$ with their projected counterparts, $R\bZ$, $R\bmu_{k}$ and $R\Sk R^{\top}$, in the expression of $D_{\kp, k}$ in equation (\ref{eqn:PDF}), which is as follows:   
\begin{eqnarray}
    D_{\kp, k}^{R}(\bZ) &=& \log \left(\displaystyle\frac{\pi_{\kp}}{\pi_{k}} \right)  - \frac{1}{2} \log \left( \displaystyle\frac{\det \left(R \Sigma_{\kp} R^{\top} \right)}{\det(R \Sk R^{\top})} \right) - \frac{1}{2} \left(\bZ - \bmu_{\kp} \right)^{\top} \Psi_{\kp} \left(\bZ - \bmu_{\kp} \right) \notag \\
    && \hspace{2.5 in} + \frac{1}{2} \left(\bZ - \bmu_{k} \right)^{\top} \Psi_{k} \left(\bZ - \bmu_{k} \right) , \label{eqn:RPDF} 
\end{eqnarray}
where $\Psi_{k} = R^{\top} \left( R \Sk R^{\top} \right)^{-1} R$ for $k =1, \ldots, J$. Based on $D_{\kp, k}^{R}$, one may obtain a classifier (say, $\delta^{R\text{-}\mathrm{QDA}}$) which is similar to $\delta^{\mathrm{QDA}}$ in (\ref{eqn:Classif2}) with $D_{\kp, k}$ replaced by~$D_{\kp, k}^{R}$ for $k \neq \kp$.

The accuracy of the projected classifier $\delta^{R\text{-}\mathrm{QDA}}$ clearly depends on the random hyperplane induced by the particular choice of $R$. 
If the projection subspace retains the essential discriminative information, then one expects the classification accuracy of $\delta^{R\text{-}\mathrm{QDA}}$ to match with that of the original classifier $\delta^{\mathrm{QDA}}$. For instance, in a two-class location problem, let the location parameters be $\bmu_{1} = (\sqrt{p}, {\bf 0}_{p - 1}^{\top})^{\top}$ and $\bmu_{2} = {\bf 0}_{p}$. 
Consider a $d\times p$ random matrix $R$ with $d=1$, whose components are i.i.d. from the sparse three-point distribution STP\{-1, 0, 1\} (described later in (\ref{eqn:STP})). In such a case, the projected space can capture the relevant classification information if the first component of $R$, say $r_{1}$, satisfies $r_{1} \in \{\pm 1\}$, which happens with probability $p^{-1/2}$. Conversely, there are alternative choices of $R$ (with $r_1 = 0$) that would yield a classifier which may be lacking in its discriminative power.

To limit subjectivity regarding the choice of projection matrix $R$, we adopt the random projection ensemble (RPE) approach where an aggregation over the outcomes of multiple independent projection matrices is considered. Towards that, consider $B$ independent $d\times p$ random matrices $R_{1}, \ldots, R_{B}$. For each random matrix $R_{b}$, let the discriminant function in the projected plane be $D^{b}_{k^{\prime},k} \coloneqq D^{R_b}_{k^{\prime},k}$ for $b = 1, \ldots, B$. 
Fix $k_{0} \in \{1, \ldots, J\}$.
The RPE-QDA discriminant function is defined as
\begin{eqnarray}
    D^{\text{RPE}}_{k^{\prime}, k}(\bZ) = \frac{1}{B} \sum_{b=1}^{B} D^{b}
_{k^{\prime}, k}(\bZ), \label{eqn:DRPE}
\end{eqnarray}
while the RPE-QDA classifier is as follows:
\begin{eqnarray}
\delta^{\mathrm{RPE\text{-}QDA}}(\bZ) =  k_{0}  ~~ \text{if} ~  D_{k_{0}, k }^{\mathrm{RPE}}(\bZ) > 0 ~~ \text{for all} ~~  k \in \{ 1, \ldots, J\} \setminus \{k_{0}\}.   \label{eqn:RPClassif}
\end{eqnarray}
Again, the unknown population parameters $(\pi_{k}, \bmu_{k}, \Sk)$ are estimated by their sample counterparts $(\widehat{\pi}_{k}, \widehat{\bmu}_{k}, \widehat{\Sigma}_k)$ as defined in (\ref{eqn:estimates}) for $k= 1, \ldots, J$.
Define $\widehat{D}^{\text{RPE}}_{k^{\prime}, k}$ as the estimated RPE-QDA discriminant function obtained by replacing $(\pi_{k}, \bmu_{k}, \Sk)$ with $(\widehat{\pi}_{k}, \widehat{\bmu}_{k}, \widehat{\Sigma}_k)$ in equation (\ref{eqn:DRPE}). 
The corresponding classifier (say, ${\delta}^{\text{RPE-QDA}}_{n}$) is called the sample RPE-QDA classifier.

\paragraph{Advantages of the RPE-QDA Approach.} 
In ultrahigh-dimensional data sets, discriminative information is often concentrated within a low-dimensional subspace.
This information may not be captured by selected summary measures, such as the mean, trace or highest eigenvalue of the covariance matrix. 
Dimension reduction via random projections provides an attractive and effective approach for handling such data sets. 
Since the dimension of the projected space \(d\) is typically much smaller than the minimum sample size \(n_{\min}\), applying sample QDA to each randomly projected subspace retains the optimality properties of QDA in the small \(p\), large \(n\) setting. Thus, employing QDA across multiple random projections, followed by pooling of information is expected to yield an overall improved classification~accuracy.

Computational complexity of the sample RPE-QDA classifier is of the order $O \left(B \left\{d p n_{\max} + d^{3} \right\}\right)$, 
where $dpn_{\max}$ corresponds to the projection step (i.e., multiplication of $R$ with $\bX_i$ for $i = 1, \ldots, n$) and $d^3$ accounts for the inversion of the projected covariance matrix.
The multiplicity factor $B$ is due to the ensemble step. Noticeably, the $B$ replications are \emph{embarrassingly parallel}. Therefore, the computational time is quite low in practice. 
Typically, a moderately large choice of $B$ suffices, so that $p \gg \max\{ B, n_{\max} \}$ and $\min\{ B, n_{\min} \} \gg d$  (see Section \ref{sec:realdata-analysis}), resulting in a total complexity of $o(p^{2})$, which is a substantial reduction compared to the usual computational cost of the classical QDA.

\vskip5pt
\noindent{\bf Choice of Random Matrix.} 
The selection of the random matrix is critical to the effectiveness of the RPE-QDA method. 
As described earlier, the selected random matrix $R = \left( (r_{ij})\right)$ has all its components generated 
i.i.d. from $G$. 
Conventional choices of $G$ include the standard Gaussian distribution, and symmetric two or three point distributions supported on $\{-1, 1\}$ or $\{-1, 0, 1\}$ with equal probabilities (see, e.g., \cite{achlioptas2001database}). 

\cite{li2006very} proposed an extreme sparse choice of the three point distribution as follows  
\begin{equation}
    r_{ij} = \begin{cases} \pm 1 \qquad \text{ with probability }\quad (2\sqrt{p})^{-1}, \\
    ~~0 \qquad ~ \text{with probability }\quad 1 - (\sqrt{p})^{-1}.
    \end{cases} \label{eqn:STP}
\end{equation}
We shall denote this sparse three point distribution as $\mathrm{STP}\{-1, 0, 1\}$. This choice of the random matrix also satisfies the JL lemma (see \cite{dasgupta2003elementary}) and is efficient in ultrahigh-dimensional sparse situations.
In this paper, we use the standard Gaussian distribution with unit variance as the choice of $G$ for our theoretical investigations. We additionally consider the choice $G \equiv \mathrm{STP}\{-1, 0, 1\}$ in our numerical work.

\section{Theoretical Results}\label{sec:theory}

Projection of ultrahigh-dimensional data to random low-dimensional subspaces may clearly lead to loss in discriminative power of a classifier. It is important to establish a set of sufficient conditions under which this loss is minimal, and also to quantify the amount of loss. Towards that, we conduct a comprehensive theoretical analysis investigating the asymptotic properties of the proposed RPE-QDA classifier in this section.

We first investigate the \emph{perfect classification} property (in an asymptotic sense) of the Bayes classifier based on QDA, followed by that of the proposed RPE-QDA classifier and its sample analog. 
Under appropriate structural assumptions on the mean vectors and covariance matrices, we first establish a connection between the discriminant functions of the aforementioned classifiers and the Kullback–Leibler (KL) divergence between the competing populations (see, e.g., \cite{KL1951}). Leveraging on this relationship, we prove that the misclassification probabilities of these classifiers converge to \emph{zero} as $\min\{B, n, p\} \to \infty$.

\subsection{Setup and Assumptions}

The perfect classification property will be examined under the following set of conditions:
\begin{enumerate}\label{assumptions}    
\itemsep-0.25em
    \item[(A1)] \hypertarget{A1} {{\bf Mean structure:}} $\max_{k \in \{1, \ldots, J\}} \| \bmu_{k} \|^{2} = O\left(p\right)$ as $p \to \infty$. 

    \item[(A2)] \hypertarget{A2} {{\bf Covariance structure:}} 
    \begin{enumerate}
    \item $\max_{k\in \{1, \ldots, J\}} \LM(\Sk) = O\left(p^{\alpha}\right)$ for some $0 \leq \alpha<1$.

    \item Let $\xi \in (0,1)$. For each $k  =1, \ldots, J$, at most $m_{k,p} \lesssim O\left(p^{\xi}\right)$ eigenvalues of $\Sk$ are different from $\Lm (\Sk) = \gamma_{k}~ (\neq 0)$ and $\log p \left\{\lambda_{1}(\Sk) + \cdots + \lambda_{m_{k,p}} (\Sk) \right\}/ p \to 0$  as $p \to \infty$.

    \end{enumerate}


    \item[(A3)] \hypertarget{A3} {{\bf Separability:}} Let $\text{KL}_{k,k'}$ denote the KL divergence of $N_p(\bm{\mu}_k,\Sigma_k)$ with respect to $N_p(\bmu_{k'},\Sigma_{k'})$, i.e.,
    $$ 2 \mathrm{KL}_{k,k'} = \tr \left(\Sigma_{k}^{-1} \Sigma_{k'} \right) + (\bmu_{k}-\bmu_{k'})^{\top} \Sigma_{k}^{-1} (\bmu_{k} -\bmu_{k'}) -p + \log \det\left(\Sigma_{k} \Sigma_{k'}^{-1} \right) .$$

    Assume that
    $\liminf_{p} p^{-1} \min_{k,k'}\{\text{KL}_{k,k'}\} \geq \nu_{0}$ for some $\nu_{0}>0$, with $k \neq \kp \in \{1, \ldots, J\}$.
\end{enumerate} 

\noindent
 
Assumption (\At) imposes a reasonable restriction on the growth of means as $p \to \infty$. In particular, if all the components of the means are bounded, then (\At) holds. 
The conditions in (\Ao) encompass a broad class of covariance matrices that includes strongly spiked eigenstructure (see, e.g., \cite{aoshima2019distance}), or eigenstructures of the identity type matrices. We can broadly characterize the class of matrices satisfying (\Ao) by $\Sigma = \gamma I +  \Gamma \Gamma^{\top}$, where $\Gamma$ is a $p\times r$ matrix with highest singular value of order $p^{\alpha}$ ($0 \leq \alpha<1$), $r\ll p$, and $\gamma>0$ is smaller than the lowerst non-zero eigenvalue of $\Gamma \Gamma^{\top}$.  
Finally, any classification method requires a minimum level of separation between the competing populations. Assumption (\Ath) essentially states that the KL divergence (a measure of separations between the two distributions) should be at least of the order of $p$.

The aforementioned assumptions are all valid under several situations. 
In particular, let us consider the following two class examples.
\begin{example}\label{ex1}
Consider a problem with $P_1 \equiv N_p(\bm{1},\Sa)$ and $P_2 \equiv N_p(\bm{2},\Sigma_2)$. Define $\Sigma_{k} = \mathrm{Diag}\big(\Gamma_{p_{k}}, \gamma_{k} I_{p - p_{k}} \big) $,
\hypertarget{par:example}{}
where $\Gamma_{p_{k}} = (1- \rho_{k}) {I_{p_{k}}} + \rho_{k} \bm{1} \bm{1}^{\top}$, $\max \{p_{1}, p_{2} \} = p^{\alpha}$ with $0 <\alpha <1$, $0\leq \rho_{k} <1$ and $0<\gamma_{k} \leq (1 - \rho_{k})$ for $k = 1,2$.
Now, (\At) holds as $\max_{k\in \{1,2\}}\|\bmu_{k}\|^{2} \leq 4p$, 
while (\Ao) holds as the highest eigenvalue of $\Sk$ is of order $p_{k}$ and $\tr(\Sk) \log p = o(p)$ for $k =1,2$.  
For (\Ath), observe that
\begin{eqnarray}
2 \mathrm{KL}_{1,2} 
\geq  (\bmu_{1}-\bmu_{2})^{\top} \Sigma_{1}^{-1} (\bmu_{2} -\bmu_{2}) 
 \geq  {\bf 1}^{\prime} \Gamma_{p_{1}}^{-1} {\bf 1} + \frac{1}{\gamma_{1}} (p - p_{1}) 
 \geq  \frac{1}{\gamma_{1}} ( p - p_{1}) \geq \frac{p}{2 \gamma_{1}}  \notag
\end{eqnarray}
as $p_{1}/p \to 0$. 
Similarly, $2 \mathrm{KL}_{1,2}  \gtrsim  p/(2 \gamma_{2})$. Thus, (\Ath) holds with $\nu_{0} = (2 \gamma_{1} \vee \gamma_{2})^{-1}$.
\end{example}
\noindent One may extend Example \ref{ex1} by taking $\lfloor p^{\beta}\rfloor$ (with $\beta \leq (1 - \alpha)$) many block equi-correlation matrices. Assumptions (\At)-(\Ath) can be verified by following similar calculations as above.

\begin{example}\label{ex2}
Let $P_1 \equiv N_p(\bm{0},\Sigma)$ and $P_2 \equiv N_p(\bm{0}, c\Sigma)$ with $\Sigma = I_{p} + P D P^{\top}$, $D = \mathrm{Diag}(\gamma_{1}, \ldots, \gamma_{r})$ and $ c >0$, $c\neq 1$. Here, $P$ is a $p \times r$ matrix satisfying $P^{\top} P = I_{r}$, $r = o(p)$ and $1 \leq \gamma_{j}< M$ (with some $M>1$) for $j =1, \ldots,r$. Clearly, assumptions (\At) and (\Ao) are satisfied here. Observe that
$ 2 \mathrm{KL}_{1,2} = (c - \log c - 1)p$ and $2 \mathrm{KL}_{2,1} = (c^{-1} - \log c^{-1} - 1)p$.
Thus, (\Ath) holds from the fact that the function $h(x) = x - \log x -1$ is positive for all $x \in (0,\infty)\setminus\{ 1\} $. 
\end{example}

\begin{example}\label{ex_extra}
    Consider a two class problem with $P_1 \equiv N_p(\bm{0},\Sa)$ and $P_2 \equiv N_p(\bm{0},\Sigma_2)$. Define $\Sigma_{1} = \mathrm{Diag}\big(\Gamma_{\lceil p/2 \rceil},  I_{\lfloor p/2 \rfloor} \big) $ and $\Sigma_{2} = \mathrm{Diag}\big(  I_{\lfloor p/2 \rfloor}, \Gamma_{\lceil p/2 \rceil} \big) $,
where $\Gamma_{r} = P D P^{\top}$ with $D = \mathrm{Diag}(p^{\alpha}, 2, \ldots, 2)$ for $0<\alpha <1$ and $PP^{\top} = P^{\top}P = I_{r}$.
Clearly, (\At) and (\Ao) hold as the highest eigenvalue of both $\Sa$ and $\Sigma_{2}$ is $p^{\alpha}$. Further,
$$ 2 \mathrm{KL}_{1,2} = \tr \left( \Sa^{-1} \Sigma_{2} \right) - p = \tr \left( D_{\lceil p/2 \rceil}^{-1}\right) + \tr \left( D_{\lfloor p/2 \rfloor} \right) - p = p^{-\alpha} + \frac{1}{2} \left\lceil \frac{p}{2} \right\rceil  +p^{\alpha}  + 2 \left\lfloor \frac{p}{2} \right\rfloor - p.$$
Similarly, $\mathrm{KL}_{2,1}\sim p/4$. Thus, (\Ath) holds with $\nu_{0} = 0.25$.
\end{example}

\begin{example}\label{ex3}
Consider the problem of classifying $P_1 \equiv N_p(\bmu_{1},\Sigma)$ and $P_2 \equiv N_p(\bmu_{2},\Sigma)$, where the common variance $\Sigma$ is as in Example \ref{ex2}.
In this case, the KL divergence will simply be proportional to the Mahalanobis distance between $P_1$ and $P_2$, i.e., $2\mathrm{KL}_{1,2} = 2\mathrm{KL}_{2,1} = (\bmu_2-\bmu_1)^{\top} \Sigma^{-1} (\bmu_2-\bmu_1) \sim \lVert \bm{\mu}_1 - \bm{\mu}_2 \rVert^2$.  
Assumptions (\At)-(\Ath) hold for any $\bmu_{1}$ and $\bmu_{2}$ satisfying $\lVert \bm{\mu}_1 - \bm{\mu}_2 \rVert^2 \gtrsim p$ and $\|\bmu_{k}\|^{2} = O\left(p\right)$ for $k=1,2$. 

In particular, $\bmu_{1} = (\sqrt{p}, {\bf 0}_{p-1}^{\top})^{\top}$ and $\bmu_{2} = {\bf 0}_{p}$ provides a choice where sufficient difference in only one index of the mean vector satisfies the requirement of (\Ath), with no scale difference.
\end{example}

Under the set of sufficient conditions (\At)-(\Ath), we are now ready to provide a framework under which the Bayes risk of the classical QDA (see equation (\ref{eqn:Classif2})) achieves
{\it perfect classification} in a limiting sense as $p \to \infty$.

\begin{theorem}\label{thm:0}
Let $\bZ$ be a test sample from one of the $J$ populations $P_{k}$ with probability distribution $N_p ( \bmu_{k}, \Sk )$ and $\Prob(\bZ \in P_{k} )= \pi_{k} \in (0,1)$ for $k \in \{1,\ldots, J\}$. Consider the classifier $\delta^{\mathrm{QDA}}(\bZ)$ defined in \eqref{eqn:Classif2} and the corresponding discriminant function $D_{k,\kp}$ defined in (\ref{eqn:PDF}). Under assumptions (\At)-(\Ao), the following statements hold:
\vspace{-.075 in}
\begin{enumerate}[(a)]
\itemsep-0.25em
\item If $\bZ \in P_{k}$, then $p^{-1}\left\{ D_{k,\kp} (\bZ) -  \mathrm{KL}_{k,\kp} \right\} \xrightarrow{P} 0$ as $p \to \infty$ for all $\kp \neq k$.
\item Additionally, under (\Ath), the misclassification probability of the classical QDA classifier $\Delta^{\mathrm{QDA}}$ satisfies 
$\lim_{p \to \infty} \Delta^{\mathrm{QDA}} = 0.$
\end{enumerate}
\end{theorem}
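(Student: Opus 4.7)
My plan is to first establish a quadratic-form concentration result for the discriminant function (part~(a)), and then deduce perfect classification (part~(b)) by combining this with the separability assumption~(\Ath).

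For part~(a), fix $k$ and $\kp \neq k$, and condition on $\bZ \in P_k$, so that $\bZ = \bmu_k + \Sigma_k^{1/2}\bW$ with $\bW \sim N_p(\bm{0}, I_p)$. Substituting into (\ref{eqn:PDF}) and expanding, the term $(\bZ - \bmu_k)^{\top}\Sigma_k^{-1}(\bZ - \bmu_k) = \bW^{\top}\bW$ is $\chi^2_p$, while $(\bZ - \bmu_{\kp})^{\top}\Sigma_{\kp}^{-1}(\bZ - \bmu_{\kp})$ splits as
\begin{equation*}
\bW^{\top}\Sigma_k^{1/2}\Sigma_{\kp}^{-1}\Sigma_k^{1/2}\bW + 2(\bmu_k - \bmu_{\kp})^{\top}\Sigma_{\kp}^{-1}\Sigma_k^{1/2}\bW + (\bmu_k - \bmu_{\kp})^{\top}\Sigma_{\kp}^{-1}(\bmu_k - \bmu_{\kp}).
\end{equation*}
A direct computation identifies $\Exp[D_{k,\kp}(\bZ) \mid \bZ \in P_k]$ as $\mathrm{KL}_{k,\kp}$ plus the bounded log-prior ratio $\log(\pi_k/\pi_{\kp})$, the latter being $O(1)$ and hence $o(p)$. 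Thus it suffices to show the centered random part is $o_P(p)$.

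The key variance computation uses the identity $\Var(\bW^{\top} A \bW + b^{\top}\bW) = 2\tr(A^2) + \|b\|^2$ for symmetric $A$ and $\bW \sim N_p(\bm{0}, I_p)$. The $\chi^2_p$ piece contributes variance $2p$. The quadratic form in $\bW$ contributes $2\tr((\Sigma_k\Sigma_{\kp}^{-1})^2) \leq 2\gamma_{\kp}^{-2}\tr(\Sigma_k^2)$, where I use $\lambda_{\min}(\Sigma_{\kp}) = \gamma_{\kp}$ from~(\Ao). Splitting $\tr(\Sigma_k^2)$ across the $m_{k,p}$ spike eigenvalues and the remaining $p - m_{k,p}$ copies of $\gamma_k^2$, and invoking Cauchy--Schwarz on the spike part gives
$$\sum_{j \leq m_{k,p}} \lambda_j(\Sigma_k)^2 \leq \lambda_{\max}(\Sigma_k) \sum_{j \leq m_{k,p}} \lambda_j(\Sigma_k) = O(p^{\alpha}) \cdot o(p/\log p),$$
so $\tr(\Sigma_k^2) = o(p^{1+\alpha}) + O(p)$. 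The linear piece contributes $4(\bmu_k - \bmu_{\kp})^{\top}\Sigma_{\kp}^{-1}\Sigma_k\Sigma_{\kp}^{-1}(\bmu_k - \bmu_{\kp}) \leq 4\|\bmu_k - \bmu_{\kp}\|^2 \lambda_{\max}(\Sigma_k)/\gamma_{\kp}^{2} = O(p^{1+\alpha})$ by (\At) and~(\Ao). Since $\alpha < 1$, all three contributions are $o(p^2)$, so Chebyshev's inequality yields $p^{-1}(D_{k,\kp}(\bZ) - \Exp[D_{k,\kp}(\bZ)]) \xrightarrow{P} 0$, which together with $p^{-1}\log(\pi_k/\pi_{\kp}) \to 0$ establishes~(a).

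For part~(b), combining~(a) with~(\Ath) gives that for $\bZ \in P_k$ and any $\varepsilon \in (0, \nu_0)$, $\Prob(D_{k,\kp}(\bZ) \leq (\nu_0 - \varepsilon)p/2) \to 0$ as $p \to \infty$, for each $\kp \neq k$. A union bound over the finitely many $\kp \neq k$ yields $\Prob(\delta^{\mathrm{QDA}}(\bZ) \neq k \mid \bZ \in P_k) \to 0$, and weighting by $\pi_k$ and summing over $k$ delivers $\Delta^{\mathrm{QDA}} \to 0$. The main obstacle will be the spectral bound on $\tr((\Sigma_k\Sigma_{\kp}^{-1})^2)$: while $\Sigma_k$ has the low-rank-plus-$\gamma_k I$ structure of~(\Ao), the product $\Sigma_k\Sigma_{\kp}^{-1}$ need not inherit this form, so the Frobenius-norm bound must rely on operator-norm control of $\Sigma_{\kp}^{-1}$ combined with the refined $o(p/\log p)$ bound on the cumulative spike mass in~(\Ao).
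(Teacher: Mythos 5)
Your proof is correct and follows essentially the same route as the paper: identify $\Exp\left[D_{k,\kp}(\bZ)\mid \bZ\in P_k\right]$ as $\mathrm{KL}_{k,\kp}$ plus an $O(1)$ log-prior term, show that the variance of the discriminant is $o(p^2)$ using (\At)--(\Ao), and conclude via Chebyshev's inequality together with (\Ath) and a union bound over class pairs. The only cosmetic difference is that you whiten $\bZ$ so that the quadratic--linear cross-covariance vanishes exactly, whereas the paper works with the uncentered quadratic form and handles that cross term by Cauchy--Schwarz.
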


Theorem \ref{thm:0} states that under assumptions (\At)–(\Ao), if the test point $\bZ \in P_k$, then the behavior of the QDA discriminant function comparing $P_k$ and $P_{\kp}$ closely aligns with the KL divergence from $P_k$ to $P_{\kp}$ (after appropriately scaling with $p$). This alignment results from the fact that $D_{k,\kp}$ essentially captures the difference in log-likelihoods, as the influence of prior probabilities diminishes when $p \to \infty$.
Moreover, the quantity $\mathrm{KL}_{k,\kp}$ corresponds to the expectation of the difference in log-likelihoods under $P_k$. Therefore, when $\bZ \in P_k$, the result follows naturally as a consequence of the law of large numbers.
Assumption (\Ath) guarantees sufficient KL divergence, which ensures that the misclassification probability of the classical QDA classifier converges to zero. This also indicates that the scaling factor $p$ for the KL divergence in (\Ath) is minimally required for the classical QDA to achieve {\it perfect~classification}.

\subsection{Perfect Classification of RPE-QDA and Sample RPE-QDA} \label{PC_RPE-QDA}

We now investigate the perfect classification property of the RPE-QDA classifier as the number of random projection matrices $B \to \infty$ (for each fixed $p$), followed by $p \to \infty$.

The RPE-QDA classifier, defined in equation (\ref{eqn:RPClassif}), is a stochastic classifier owing to the randomness of the matrices $R_{1}, \ldots, R_{B}$ given a test sample $\bZ$. The misclassification probability corresponding to RPE-QDA is denoted by $\Delta^{\text{RPE-QDA}}$.
In Theorem \ref{thm:1}, we show that $\Delta^{\text{RPE-QDA}}$ also converges to \emph{zero} under a Gaussianity assumption on the random matrices, when the number of independent random matrices $B \to \infty$ and $p \to \infty$. 
At this point, it is important to note the order in which the two limits are taken. 
For a given dimension $p$, we consider $B$ independent random matrices of order  $p \times d$, where $d \equiv d(p)$ is a function of~$p$. 
Therefore, we first consider the limit with respect to $B$ for each fixed $(p, d)$, and then let $p \to \infty$ to establish \textit{perfect classification} of $\delta^{\text{RPE-QDA}}$. 

\begin{theorem}\label{thm:1}
Let $\bZ$ be a test sample from one of the $J$ populations $P_{k}$ with probability distribution $N_p ( \bmu_{k}, \Sk)$ and $\Prob(\bZ \in P_{k} )= \pi_{k} \in (0,1)$ for $k \in \{1,\ldots, J\}$. 
Consider the classifier $\delta^{\mathrm{RPE\text{-}QDA}}(\bZ)$ defined in equation (\ref{eqn:RPClassif}) with $G \equiv N(0,1)$, and the corresponding discriminant function $D_{k, \kp }^{\mathrm{RPE}}(\bZ)$. Under assumptions (\At)-(\Ao), the following hold:
\vspace{-.075 in}
\begin{enumerate}[(a)]
\itemsep-0.25em
\item Define $D_{k,\kp}^{\mathrm{RP}}(\bZ):= \Exp_{R} \big [ D_{k,\kp}^{\mathrm{RPE}}(\bZ) \big ]$.
For each fixed $p$, we have
$$ D_{k, \kp }^{\mathrm{RPE}}(\bZ) \xrightarrow{a.s.} D_{k,\kp}^{\mathrm{RP}}(\bZ) \mbox{ as } B \to \infty.$$

\item If $\bZ \in P_{k}$ and $d = O\left(\log p\right)$ with $d \to \infty$ as $p\to \infty$, then 
$$ \left[ d^{-1} D_{k,\kp}^{\mathrm{RP}}(\bZ) - p^{-1} \mathrm{KL}_{k,\kp} \{ 1 + o(1)\} \right] \geq 0$$
with probability tending to one as $p \to \infty$.
 
\item Additionally, under (\Ath), the misclassification probability of the RPE-QDA classifier $\Delta^{\mathrm{RPE\text{-}QDA}}$ satisfies 
$$ \lim_{p \to \infty} \lim_{B \to \infty} \Delta^{\mathrm{RPE\text{-}QDA}} = 0 .$$
\end{enumerate}
\end{theorem}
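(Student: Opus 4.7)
\textbf{Proof plan for Theorem~\ref{thm:1}.}

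\emph{Part (a)} is the easy one. With $\bZ$ held fixed, the sequence $\{D^{b}_{k,\kp}(\bZ)\}_{b=1}^{B}$ is i.i.d.\ in $b$ (inheriting independence from the $R_b$'s). For $G\equiv N(0,1)$ and $d<p$, $R\Sk R^{\top}$ is a.s.\ positive definite, and standard Wishart moment bounds give $\Exp_R\|(R\Sk R^{\top})^{-1}\|_{\mathrm{op}}+\Exp_R|\log\det(R\Sk R^{\top})|<\infty$. These two pieces control the quadratic-form and log-determinant contributions to $D^{R}_{k,\kp}(\bZ)$ in \eqref{eqn:RPDF}, so $\Exp_R|D^{R}_{k,\kp}(\bZ)|<\infty$, and Kolmogorov's SLLN delivers $D^{\mathrm{RPE}}_{k,\kp}(\bZ)\xrightarrow{a.s.} D^{\mathrm{RP}}_{k,\kp}(\bZ)$ as $B\to\infty$.

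\emph{Part (b)} is the heart of the argument. Re-centering $\bZ-\bmu_{\kp}=(\bZ-\bmu_k)+(\bmu_k-\bmu_{\kp})$ inside \eqref{eqn:RPDF} and then applying $\Exp_R$ gives the decomposition
\[
D^{\mathrm{RP}}_{k,\kp}(\bZ) \;=\; c_{k,\kp} \;+\; \tfrac{1}{2}(\bZ-\bmu_k)^{\top} M_{k,\kp}(\bZ-\bmu_k) \;+\; (\bmu_k-\bmu_{\kp})^{\top}\bar\Psi_{\kp}(\bZ-\bmu_k),
\]
with $\bar\Psi_{j}:=\Exp_R[\Psi_j]$, $M_{k,\kp}:=\bar\Psi_{\kp}-\bar\Psi_k$, and $c_{k,\kp}$ gathering the log-prior, the expected log-determinant difference, and the deterministic quadratic $\tfrac{1}{2}(\bmu_k-\bmu_{\kp})^{\top}\bar\Psi_{\kp}(\bmu_k-\bmu_{\kp})$. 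The crux is then to establish two ``master identities'' valid under (\Ao):
\[
\Exp_R\tr(\Psi_{j} B) \;=\; \tfrac{d}{p}\,\tr(\Sigma_{j}^{-1} B)\,\{1+o(1)\}, \qquad \Exp_R\log\det(R\Sigma_{j} R^{\top}) \;=\; \tfrac{d}{p}\log\det\Sigma_{j} + o(d).
\]
Both come from Gaussian concentration of $R\Sigma_j R^{\top}$ about its mean $\tr(\Sigma_j)\,I_d$ (with relative error $O\big(\sqrt{\tr(\Sigma_j^{2})}/\tr(\Sigma_j)\big)=o(1)$ under (\Ao)) combined with the exact identity $\tr(\Psi_j\Sigma_j)=d$ that follows from $\Sigma_j^{1/2}\Psi_j\Sigma_j^{1/2}$ being an orthogonal projection of rank $d$ almost surely. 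Substituting into the decomposition and recognizing the closed form of $\mathrm{KL}_{k,\kp}$ in (\Ath) yields $\Exp_{\bZ\in P_k}[D^{\mathrm{RP}}_{k,\kp}(\bZ)]=(d/p)\mathrm{KL}_{k,\kp}\{1+o(1)\}$ (the $O(1)$ log-prior being absorbed since $\mathrm{KL}_{k,\kp}\gtrsim p$ by (\Ath)). To upgrade this expectation to the required in-probability bound, I would apply Hanson--Wright to the Gaussian quadratic $(\bZ-\bmu_k)^{\top}M_{k,\kp}(\bZ-\bmu_k)$ and a Chebyshev bound to the linear cross-term $(\bmu_k-\bmu_{\kp})^{\top}\bar\Psi_{\kp}(\bZ-\bmu_k)$; both variances are $o(d^{2})$ under (\At)--(\Ao), so the fluctuations about the mean are $o(d)$ and the lower bound $d^{-1} D^{\mathrm{RP}}_{k,\kp}(\bZ)\geq p^{-1}\mathrm{KL}_{k,\kp}\{1+o(1)\}$ holds with probability $1-o(1)$. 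The constraint $d=O(\log p)$ with $d\to\infty$ is what keeps the residuals in both master identities $o(d)$ uniformly over the $d\times d$ matrix $R\Sigma_j R^{\top}/\tr(\Sigma_j)-I_d$.

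\emph{Part (c)} is a short assembly. Fix $\bZ\in P_k$ and $\kp\ne k$. From (b) and (\Ath), with probability $1-o(1)$ as $p\to\infty$, $D^{\mathrm{RP}}_{k,\kp}(\bZ)\ge d\nu_0\{1+o(1)\}>0$ for all $p$ large enough (using $d\to\infty$). By (a), $D^{\mathrm{RPE}}_{k,\kp}(\bZ)\to D^{\mathrm{RP}}_{k,\kp}(\bZ)$ a.s.\ as $B\to\infty$, so $D^{\mathrm{RPE}}_{k,\kp}(\bZ)>0$ on the same event in the limit $B\to\infty$, and \eqref{eqn:RPClassif} assigns $\bZ$ to class $k$. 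A union bound over $\kp\ne k$ combined with averaging over $k$ via \eqref{eqn:MP_population} gives $\lim_{p\to\infty}\lim_{B\to\infty}\Delta^{\mathrm{RPE\text{-}QDA}}=0$. I expect the main technical obstacle to be the first master identity, $\Exp_R\tr(\Psi_{\kp} B)\approx (d/p)\tr(\Sigma_{\kp}^{-1}B)$: equality holds only when $\Sigma_{\kp}\propto I_p$ (so $\Sigma_{\kp}^{1/2}\Psi_{\kp}\Sigma_{\kp}^{1/2}$ is a uniformly distributed rank-$d$ projection with expectation $(d/p) I_p$), and extending it to non-isotropic $\Sigma_{\kp}$ needs a careful perturbation analysis of the row-space of $R\Sigma_{\kp}^{1/2}$. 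Assumption (\Ao), through its isotropic-bulk-plus-low-rank-spike structure, is exactly what keeps that perturbation at the required order in the trace sense used above.
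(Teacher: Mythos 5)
Parts (a) and (c) of your plan are essentially the paper's argument: (a) is the same SLLN-plus-Wishart-integrability step, and (c) is the same assembly (transfer from $D^{\mathrm{RPE}}$ to $D^{\mathrm{RP}}$, then use (b) with (A3) and a union bound over class pairs). One caveat on (c): the a.s.\ convergence in (a) is conditional on $\bZ$, so to move the limit in $B$ inside the unconditional misclassification probability you must exchange $\lim_B$ with the expectation over $\bZ$ (the paper does this via dominated convergence applied to the indicator $\mathbb{I}(|D^{\mathrm{RPE}}-D^{\mathrm{RP}}|>\nu_0/4)$); your phrase ``on the same event in the limit'' skates over this.

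The genuine gap is in part (b), and it is precisely the step you flag as the crux. Your first master identity, $\Exp_R\tr(\Psi_j B)=\frac{d}{p}\tr(\Sigma_j^{-1}B)\{1+o(1)\}$, is false under (A2) for $B$ supported on the spiked directions of $\Sigma_j$ — and not by a vanishing perturbation. Take $\Sigma_j=I_p+(s-1)e_1e_1^{\top}$ with $s=p^{\alpha}$ and $B=e_1e_1^{\top}$. Writing $u=Re_1$ and $W=\sum_{l\ge2}(Re_l)(Re_l)^{\top}$, Sherman--Morrison gives $e_1^{\top}\Psi_j e_1=\frac{u^{\top}W^{-1}u}{1+s\,u^{\top}W^{-1}u}\approx\frac{d/p}{1+sd/p}\approx d/p$ (since $sd/p\to0$ when $d=O(\log p)$, $\alpha<1$), whereas $\frac{d}{p}\tr(\Sigma_j^{-1}B)=\frac{d}{p s}$; the two differ by a factor $p^{\alpha}$. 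The correct first-order behaviour coming from the concentration $R\Sigma_jR^{\top}\approx\tr(\Sigma_j)I_d$ is $\Exp_R[\Psi_j]\approx\frac{d}{\tr(\Sigma_j)}I_p\approx\frac{d}{\gamma_j p}I_p$, which is \emph{not} asymptotic to $\frac{d}{p}\Sigma_j^{-1}$ unless $\Sigma_j\propto I_p$; your own concentration heuristic actually produces this, not the identity you then use. Likewise the second identity is off by $d\log p$ (one gets $d\log\tr(\Sigma_j)+o(d)=d\log p+d\log\gamma_j+o(d)$, versus $\frac{d}{p}\log\det\Sigma_j=d\log\gamma_j+o(d)$); this particular error cancels in the difference of the two log-determinants, but it shows the identities cannot be used term by term. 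Consequently your intermediate equality $\Exp_{\bZ\in P_k}[D^{\mathrm{RP}}_{k,\kp}(\bZ)]=\frac{d}{p}\mathrm{KL}_{k,\kp}\{1+o(1)\}$ is not justified and is in general false as an equality. The one-sided statement of part (b) is still reachable by your decomposition, because the discrepancy is favourable: one needs only the PSD comparison $\Exp_R[\Psi_{\kp}]\succeq\frac{d}{p}\Sigma_{\kp}^{-1}\{1+o(1)\}$ (using $\gamma_{\kp}=\Lm(\Sigma_{\kp})$ so that $\frac{d}{\gamma_{\kp}p}I_p\succeq\frac{d}{p}\Sigma_{\kp}^{-1}$) for the positively-signed quadratic term, together with the exact identity $\tr(\Psi_k\Sigma_k)=d$ for the negatively-signed one — which is essentially what the paper's Lemma \ref{lm:1}/Lemma \ref{lm:2} machinery delivers. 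As written, though, the central computation of your part (b) rests on identities that fail by polynomial factors, so the proof does not go through without replacing them by the appropriate one-sided bounds.
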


\noindent
In Theorem \ref{thm:1}, the projected dimension $d$ is required to increase with $p$ as well.
The growth of $p$ signifies an increasing accumulation of information, and to accommodate this, $d$ must grow, albeit at a much slower rate.

\begin{remark}\label{rm:1}
A comparison between parts (a) of Theorem \ref{thm:0} and (b) of Theorem \ref{thm:1} clearly highlights the loss in information induced by the use of RPE. While the QDA discriminant function requires a scaling factor of order $p$ (consistent with KL divergence), the RP-QDA discriminant involves a significantly smaller scaling factor of $d$. This reduction reflects a diminished discriminative power of the RP-QDA classifier due to this dimension reduction. Moreover, as expected, increasing the value of the reduced dimension $d$ leads to an improvement in the discriminative performance.
\end{remark}

\paragraph{Sample RPE-QDA classifier.} Next, we turn our attention to the sample version of the RPE-QDA classifier, ${\delta}^{\text{RPE-QDA}}_{n}$, proposed in Section \ref{sec:RPE}. This is obtained by estimating the population parameters  $(\pi_{k}, \bmu_{k}, \Sk)$ for $k = 1, \ldots, J$ with their sample analogs (defined in (\ref{eqn:estimates})) in the classifier $\delta^{\text{RPE-QDA}}$.
The corresponding misclassification probability $\Delta^{\text{RPE-QDA}}_{n}$ is obtained by using the empirical RPE-QDA classifier in equation \eqref{eqn:MP_population}. 
It is important to observe that $\Delta^{\text{RPE-QDA}}_{n}$ includes randomness from the random projections as well as that of the training samples.
Theorem \ref{thm:2} below provides a set of sufficient framework under which the classifier ${\delta}^{\text{RPE-QDA}}_{n}$ yields \textit{perfect classification} in a limiting sense
as $B \to \infty$ and $\min\{n, p\} \to~\infty$.

For each set of training samples, $B$ independent random matrices are generated to obtain the RPE-QDA classifier. As is typical in the ultrahigh-dimensional asymptotic regime, we consider the dimension $p$ to be a function of the sample size $n$ (although no specific restriction has been imposed on the growth rate of $p$ relative to $n$). Therefore, the limit is first taken with respect to $B$, followed by the limit on $(n,p)$.

\begin{itemize}
\item [(A4)] \hypertarget{A4}{{\bf Rates of increase of $\{n_1, \ldots,n_{J},d\}$:}} Assume that $n_k/n_{\kp} \to \beta^{k,\kp} \in (0, \infty)$ as $n \to \infty$ for all $k,k' \in \{1, \ldots, J\}$. Further,
the reduced dimension $d$ satisfies $d = o\left( n_{\min}\right)$ and $d = O\left(\log p\right)$ with $d\rightarrow \infty$ as $p\to \infty$.
\end{itemize}

\noindent
The first part of assumption (\Af) is a balanced design condition, which restricts uniform dominance of one group over the others. 
There is no explicit assumption on the rate of growth of $n$ compared to $p$.
The reduced dimension $d$ is allowed to grow at a slower rate than $n_{\min}$. 
This in turn ensures that $R\widehat{\Sigma}_k R^{\top}$ is invertible with probability $1$ for all $k \in \{1, \ldots, J\}$.
In a typical ultrahigh-dimensional scenario, one assumes $p \sim \exp\{o(n)\}$. The condition $d \sim O(\log p)$ is equivalent to $d \sim o(n)$, which is compatible with (\Af) in such situations. 

\begin{theorem}\label{thm:2}
Let $\bZ$ be a test sample from one of the $J$ populations $P_{k}$ with probability distribution $N_p ( \bmu_{k}, \Sk)$ and $\Prob(\bZ \in P_{k} )= \pi_{k} \in (0,1)$ for $k \in \{1,\ldots, J\}$. 
Consider the sample classifier $\delta_n^{\mathrm{RPE\text{-}QDA}}(\bZ)$ with $G \equiv N(0,1)$. Under assumptions (\At)-(\Af), the misclassification probability of the sample RPE-QDA classifier $\Delta_n^{\mathrm{RPE\text{-}QDA}}$ satisfies 
$$ \lim_{\min\{n, p\} \to \infty} \lim_{B \to \infty} \Delta_n^{\mathrm{RPE\text{-}QDA}} = 0.$$
\end{theorem}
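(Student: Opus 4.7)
The plan is to mirror the three-step structure of Theorem \ref{thm:1} while inserting a concentration layer for the sample estimates. Writing $\widehat{D}^{\mathrm{RPE}}_{k,k_0}(\bZ) = B^{-1}\sum_{b=1}^{B} \widehat{D}^{R_b}_{k,k_0}(\bZ)$ and conditioning on $(\bZ, \bX_1,\ldots, \bX_n)$, the summands are i.i.d.\ in $b$, so the Kolmogorov SLLN gives
\[
    \widehat{D}^{\mathrm{RPE}}_{k,k_0}(\bZ) \xrightarrow{a.s.} \widehat{D}^{\mathrm{RP}}_{k,k_0}(\bZ) := \Exp_{R}\bigl[\widehat{D}^{R}_{k,k_0}(\bZ) \mid \bZ, \bX \bigr] \quad \text{as } B \to \infty.
\]
This is the sample analogue of part (a) of Theorem \ref{thm:1} and removes the ensemble randomness, leaving a function of $\bZ$ and $\bX$ alone to analyze.

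The main step is to compare $\widehat{D}^{\mathrm{RP}}_{k,k_0}(\bZ)$ with its population analogue $D^{\mathrm{RP}}_{k,k_0}(\bZ)$ from Theorem \ref{thm:1}(b). Three ingredients enter: the log-determinant ratio $\log\{\det(R \widehat{\Sigma}_{k_0} R^{\top})/\det(R \widehat{\Sigma}_k R^{\top})\}$, the quadratic forms $(\bZ - \widehat{\bmu}_{j})^{\top} R^{\top} (R \widehat{\Sigma}_j R^{\top})^{-1} R (\bZ - \widehat{\bmu}_{j})$ for $j \in \{k,k_0\}$, and the $R$-expectation of these. The key probabilistic inputs are: (i) conditional on $\bX$ and $R$, $R\widehat{\Sigma}_j R^{\top}$ has a scaled Wishart distribution $W_d(n_j-1,\, R\Sk R^{\top})/(n_j-1)$ that concentrates around $R\Sk R^{\top}$ in operator norm at rate $O_{P}(\sqrt{d/n_j})$; (ii) $R(\widehat{\bmu}_{j} - \bmu_{j})$ has conditional covariance $R\Sk R^{\top}/n_j$ and hence Euclidean length $O_{P}(\sqrt{d/n_j})$; (iii) assumption (\Ao) controls the extreme eigenvalues of $R \Sk R^{\top}$ uniformly after averaging over $R$, as already used in Theorem \ref{thm:1}. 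A first-order expansion $\log\det(M+E) = \log\det(M) + \tr(M^{-1}E) + O(\|M^{-1}E\|_{F}^{2})$ applied to the projected matrices, together with a Neumann expansion for the inverse inside the quadratic forms, should yield
\[
    \bigl|\widehat{D}^{\mathrm{RP}}_{k,k_0}(\bZ) - D^{\mathrm{RP}}_{k,k_0}(\bZ)\bigr| = o_{P}(d) \quad \text{as } \min\{n,p\} \to \infty,
\]
each error term being of order at most $d^{2}/n_{\min} = o(d)$ under (\Af).

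Combining the above with Theorem \ref{thm:1}(b) and assumption (\Ath) gives, for $\bZ \in P_k$ and all $k_0 \neq k$,
\[
    d^{-1} \widehat{D}^{\mathrm{RP}}_{k, k_0}(\bZ) \geq p^{-1} \mathrm{KL}_{k, k_0}\{1 + o(1)\} - o_{P}(1) \geq \nu_{0}/2 > 0
\]
with probability tending to one. Hence the sample RPE-QDA rule assigns $\bZ$ to the correct class $k$ with probability tending to one, uniformly over $k$. Summing over the $J$ classes and using the trivial bound $\Delta_{n}^{\mathrm{RPE\text{-}QDA}} \leq 1$ to invoke dominated convergence yields the claimed $\lim_{\min\{n,p\} \to \infty}\lim_{B\to \infty} \Delta_{n}^{\mathrm{RPE\text{-}QDA}} = 0$.

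The hard part will be step (i): producing a concentration inequality for $R \widehat{\Sigma}_j R^{\top}$ around $R \Sk R^{\top}$ that is uniform enough in $R$ to control $\Exp_{R}\{\log\det(R \widehat{\Sigma}_j R^{\top})\} - \Exp_{R}\{\log\det(R \Sk R^{\top})\}$ by $o(d)$. This requires combining Wishart concentration in the $d/n_{\min} \to 0$ regime with the eigenvalue controls from (\Ao), and then integrating the resulting error over the distribution of $R$ without losing factors of $p$ through $\Exp_{R}$ — the very delicate step whose population analogue drove Theorem \ref{thm:1}.
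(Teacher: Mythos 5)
Your proposal follows essentially the same route as the paper: a conditional SLLN over the ensemble index $B$ (followed by dominated convergence), a comparison of the sample quantity $\widehat{D}^{\mathrm{RP}}$ with its population analogue $D^{\mathrm{RP}}$ showing the gap is $o_{P}(d)$, and then the lower bound $d^{-1}D^{\mathrm{RP}} \gtrsim p^{-1}\mathrm{KL}\{1+o(1)\}$ from Theorem \ref{thm:1}(b) together with (\Ath) and the union bound of Lemma \ref{lm:0}. The step you flag as hard is precisely the paper's Lemma \ref{lm:3} (proved in the Supplementary via the same Wishart-concentration ingredients you outline), so the argument is correct and structurally identical.
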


\begin{remark}
Parts (a) and (b) of Theorem \ref{thm:1} can similarly be established for the sample version of the RPE-QDA classifier, i.e., these results hold even if we replace $D_{k,\kp}^{\mathrm{RP}}$ and $D_{k,\kp}^{\mathrm{RPE}}$ by their sample counterparts. Therefore, $\widehat{D}^{\mathrm{RP}}_{k,\kp}$ entails no additional loss of information compared to its population analog. The same scaling factor $d$ obtained in $D^{\mathrm{RP}}_{k,\kp}$ (see part (b) of Theorem \ref{thm:1}) remains appropriate for $\widehat{D}^{\mathrm{RP}}_{k,\kp}$ too. 
\end{remark}

\noindent
As a common thread among Theorems \ref{thm:0}-\ref{thm:2} we consider the limit as $p\to \infty$.  This is because of the implicit assumption that separability among the populations becomes significant as the dimension $p$ grows (see also assumption (\Ath)). To control the variability incurred from using different random projections in RPE-QDA, we require $B \to \infty$. To further control the fluctuations of the training sample, a balanced design condition along with $n \to \infty$ is necessary (see (\Af)) for consistency of the sample RPE-QDA classifier. 
We conclude this section with a discussion and additional insights into the RPE-QDA approach by comparing it with some existing methods in the literature. 

\vskip5pt
\noindent{\bf A comparison with existing literature.} 
As mentioned in the Introduction, several approaches to high-dimensional QDA rely on specific summary measures. For instance, the method proposed by \cite{AoYa2014} (say, AoYa) captures class separation only through differences in means and the traces of covariance matrices. 
\cite{wu2019quadratic} proposed the \emph{ppQDA} approach, which additionally utilizes the sum of off-diagonal elements to enhance discrimination. 
From a theoretical standpoint, both methods establish convergence of the Bayes risk to zero. However, a significant separation among the related summary measures is necessary for these methods to yield desirable results.   
In a two-class problem,
a key assumption for consistency of AoYa is $ \tr(\Sigma_1^2) \vee \tr(\Sigma_2^2) / \{n_k \lVert\bm{\mu}_1 - \bm{\mu}_2 \rVert^2\} \to 0$ for $k \in \{1,2\}$, which explicitly excludes cases with no mean separation.
For ppQDA, a key assumption is $|(a_1 - r_1) - (a_2 - r_2)| > \delta_0 > 0$,
where ${a}_k = p^{-1}\tr({\Sk})$ and ${r}_k = \{p(p-1)\}^{-1} \left\{\bm{1}^{\top}{\Sk}\bm{1} - \tr({\Sk}) \right\}$ for $k \in \{1,2\}$.
This condition requires sufficient discriminative information in the averages of the diagonal and off-diagonal elements of the covariance matrices.
However, it is possible for the population distributions to differ significantly even when these summary measures remain identical.
For instance, in Example \ref{ex_extra}, we have $\bmu_{1} = \bmu_{2} = {\bf 0}$, $a_{1} = a_{2}$ and $r_{1} = r_{2}$, although the competing populations differ significantly, as evidenced by a KL divergence of order $0.25 p$.

In contrast, RPE-QDA is able to distinguish between populations that differ significantly in any form, not necessarily limited to those captured via selective summary measures. 
The relation between RPE-QDA discriminant function and KL divergence, established in Theorem \ref{thm:1} (also see Remark \ref{rm:1}), demonstrates its potential, while also quantifying the loss due to RPE.   
Recently, \cite{palias2023effect} derived an upper bound on the Bayes error for RP-QDA using a single standard normal random matrix. This bound depends on the summary measures such as $\lVert\bm{\mu}_1 - \bm{\mu}_2 \rVert^2$, $\LM(\Sigma_k)$, and $\tr(\Sigma_k)$ for $k=1,2$. When these quantities do not vary across populations, the bound becomes uninformative. For instance, in Example \ref{ex_extra}, the bound yields the trivial value 1, although RPE-QDA demonstrates high discriminative ability in such settings (see Scheme 4 in Section \ref{sec:simulation}). In this work, we have aimed to unveil the full potential of RPE-QDA through our theoretical results.

Clearly, our theoretical results are only asymptotic in nature, and the finite sample performance of these classifiers is of significant practical interest and importance. To assess this further, we conduct some numerical experiments in the next two sections.

\section{Simulation Experiments}\label{sec:simulation}

In this section, we analyze four different simulation scenarios for the two class problem. 
For each setting, the data dimension $p$ ranges from  $500$ to $10000$ in the logarithmic scale (with step size $2$).
We set the train and test sample sizes as $100$ and $200$, respectively, for each class. 

\vskip5pt
\noindent {\it Competitors.} We compare the performance of RPE-QDA with the following methods: high-dimensional discriminant analysis (HDDA) proposed by \cite{bouveyron2007high}); IIS-SQDA classifier developed by \cite{fan2015innovated};  direct
approach for QDA (DA-QDA) proposed by \cite{jiang2018direct};
distance-based classifier developed by \cite{AoYa2014} (AoYa); QDA based on a pooling operation on covariance matrices (ppQDA) proposed by \cite{wu2019quadratic}; and bagging-based RPE method of\cite{cannings2017random}~(RPE-CS). 

\vskip5pt
\noindent {\it Proposed method.} We consider two variants of the proposed $\text{RPE-QDA}$ method to accommodate different choices of random matrices.  
The first variant, denoted by $\text{RPE-SN}$, has i.i.d. entries of a random matrix from the standard normal distribution, i.e., $G \equiv N(0,1)$. The second variant, termed as $\mathrm{RPE\text{-}STP}$, considers $G \equiv \mathrm{STP}\{-1, 0, 1\}$ (see (\ref{eqn:STP})).
R codes corresponding to RPE-SN and RPE-STP are available from \href{https://github.com/AnneshaDeb99/RPQDA.git}{this link}.

\vskip5pt
\noindent{\it Choice of tuning parameters:} 
Theoretical results suggest a choice of the intrinsic dimension $d$ of order $O(n_{\min} \wedge \ceil{\log p})$. However, to maintain a fixed choice across all simulation settings, we set $d = n_{\min} \wedge  \max_{p} \ceil{\log p} = 10$, where the maximum is taken over all the values of $p$ considered here.
From a theoretical perspective, a large choice of $B$ is ideal. Although, in practice, a moderately large choice suffices. Therefore, we set $B = 200$. 

\vskip5pt
\noindent{\it Evaluation method.}
We compare the competing methods using the
empirical misclassification probabilities,
which is defined as  
${\widehat{\Delta}}_{n} = \hat{\pi}_{1} \hat{p}_{1} + \hat{\pi}_{2} \hat{p}_{2}, $
where $\hat{p}_{k}$ is the proportion of test samples from population $P_{k}$ that are incorrectly classified for $k=1,2$. 
The average and standard deviation (in parentheses) of empirical misclassification probability over $50$ replications for each of our simulation schemes are reported in Tables \ref{tab1}-\ref{tab4}~below. The minimum value of the average misclassification rate under each simulation setting is marked in bold. 
The Bayes risk is zero for all the simulation schemes considered in this section. To further assess the degree of separability, we report the value of $\mathrm{KL}/p \coloneqq \min{\{\mathrm{KL}_{1,2}, \mathrm{KL}_{2,1}\}}/p$ in each simulation scheme. 
A higher value of $\mathrm{KL}/p$ indicates a larger separation between the probability distributions and is expected to correspond to a lower misclassification probability across all methods. 

\vskip5pt
\noindent {\it Other specifications.} We implemented HDDA using the R package {\tt HDClassif}, and considered all the eigenstructures mentioned in \cite{bouveyron2007high} to select the one that provides the maximum reduction in empirical misclassification probability. For DA-QDA and IIS-SQDA, we used the R codes provided by the authors of \cite{jiang2018direct}. 
Additionally, for the IIS-SQDA, the R package {\tt glasso} is used to estimate the precision matrices under the sparsity assumption considered in \cite{fan2015innovated}. 
The method $\mathrm{RPE\text{-}CS}$ is implemented using the R package {\tt RPEnsemble}, with reduced dimension $d = 10$.
The number of random matrices at two stages are set to $B_1 = 500$ and $B_2 = 50$, as suggested by \cite{cannings2017random}. 

\vskip5pt

Our simulation settings are described below. 
The two candidate populations are denoted by $P_{1} \equiv N_p(\bmu_{1}, \Sa)$ and $P_{2} \equiv N_p(\bmu_{2}, \Sigma_{2})$, respectively. 
In the first simulation setting, we consider location as well as scale differences, whereas in the remaining settings, we consider only scale~differences. In each scheme, the prior probabilities for the two classes are taken to be equal, i.e., $\pi_1 = \pi_2 = 1/2$.

\vskip5pt
\noindent{\bf Scheme 1:} 
Consider
$\bmu_{1} = {\bf 0}$, $\bmu_{2} = \begin{bmatrix} {\bf 0} : {\bf 1}_{l}^{\top} : -{\bf 1}_{l}^{\top} \end{bmatrix}$ and  $\Sigma_k = \mathrm{Diag}\left( \Gamma_{p_{k(1)}},  \Gamma_{p_{k(2)}}, c_k \Omega_{p_{k(3)}}(\rho) \right)$ for $k = 1,2$, 
where $l = \lfloor p^{3/5}/2 \rfloor$, $ \Gamma_{t} = 0.5 I_{t} + 0.5 {\bf 1} {\bf 1}^{\top}$, $\Omega_t(\rho) = B_t \left((\rho^{|i-j|}) \right) B_t$ and $B_{t}= (1.5 + 1/t)^{1/2} I_{t}$.
We further set
$p_{1(1)} = \lfloor p^{2/3}\rfloor$, $p_{1(2)} = \lfloor p^{1/3}\rfloor$, $p_{1(3)} = p - p_{1(1)} - p_{1(2)}$, 
$p_{2(1)} = \lfloor p^{1/2}\rfloor$, $p_{2(2)} = \lfloor p^{1/2}\rfloor$, $p_{2(3)} = p - p_{2(1)} - p_{2(2)}$, 
$(c_1, c_2) = (1, 1.3)$ and $\rho = 0.7$.
This example has been analyzed by \cite{aoshima2019distance}.

\vskip5pt
\noindent{\bf Scheme 2:} Set $\bm{\mu}_1 = \bm{\mu}_2 = \bm{0}$.
Consider a block diagonal matrix $\Sa$ with the first $\lfloor p^{\beta_1} \rfloor$ diagonal blocks having an equi-correlation structure, with $\rho = 0.9$ and dimension $\lfloor p^{\alpha_1} \rfloor$, while the last block is an identity matrix. Define $\Sigma_2$ in a similar way but with a different set of parameters $\alpha_2$ and $\beta_2$. We take $\alpha_1 = 0.6$, $\beta_1 = 0.4$ and $\alpha_2 = 0.7$, $\beta_2 = 0.3$.

\vskip5pt
\noindent{\bf Scheme 3:} Consider $\bm{\mu}_1 = \bm{\mu}_2 = \bm{0}$, and $\Sigma_{1} = c \Sigma_{2}$.
The precision matrix of $\Sigma_{1}$ is set to $\Sigma_{1}^{-1} = \left( ( 0.9^{|i - j|}) \right)$,
for all $i, j \in \{1, \ldots, p\}$ and $c^{-1} = 1.3$.

\vskip5pt

\noindent{\bf Scheme 4:} Consider $\bm{\mu}_1 = \bm{\mu}_2 = \bm{0}$, 
$\Sa = \mathrm{Diag} \left(P\Lambda_{l}P^{\top}, I_{p - l} \right)$
and 
$\Sigma_{2} =  \mathrm{Diag} \left(I_{p - l}, P\Lambda_{l}P^{\top} \right)$,
where $l = \lfloor p^\beta \rfloor$, $\Lambda_{l} = \mathrm{diag}(p^\alpha, (p^\alpha -1)\ldots, (p^\alpha - l + 1))$ and $PP^{\top} = P^{\top}P = I_l$.
We set $\alpha = 0.6$ and $\beta = 0.5$. 

\begin{table}[ht]
\centering
    \renewcommand*{\arraystretch}{1.2}
    \begin{small}
\begin{tabular}{| c | c  c  c  c c c |}
  \hline
 Methods & $p = 512$ & $p = 1024$ & $p = 2048$ & $p = 4096$ & $p = 8192$ & $p = 10000$ \\ \hline
  HDDA & 0.14 (0.02)  & 0.10 (0.02)  & 0.07 (0.01)  & 0.04 (0.01)  & 0.02 (0.01)  &  0.02 (0.01)  \\
  AoYa & 0.14 (0.02)  & 0.11 (0.02)  & 0.07 (0.01)  & 0.04 (0.01)  &  0.02 (0.01)  & 0.02 (0.01)  \\
  ppQDA & 0.08 (0.01)  & {\bf 0.04} (0.01)  & {\bf 0.02} (0.00)  & {\bf 0.00} (0.00)  & {\bf 0.00} (0.00)  & {\bf 0.00} (0.00)  \\
  DA-QDA  & 0.11 (0.09)  & - & - &  - & - & - \\
  IIS-SQDA & 0.10 (0.02)  & 0.10 (0.02)  & 0.07 (0.01) & - & - & -  \\
   RPE-CS & 0.10 (0.01)  & 0.06 (0.02)  &  0.03 (0.01)  & 0.02 (0.01)  &  0.03(0.01)  & 0.03 (0.01)  \\
  RPE-SN & {\bf 0.06} (0.01)  & {\bf 0.04} (0.01)  & 0.03 (0.00)  & 0.03 (0.00)  & 0.03 (0.00)  & 0.02 (0.00)  \\
  RPE-STP & {\bf 0.06} (0.01)  & {\bf 0.04} (0.01)  & 0.04 (0.00)  & 0.03 (0.01)  & 0.04 (0.01)  & 0.04 (0.01)  \\
   \hline
KL$/p$ & 0.04  & 0.04  & 0.04  & 0.03  & 0.03  & 0.03  \\ \hline
\end{tabular}
\caption{Mean and standard deviation (in brackets) of the empirical misclassification probabilities for different methods in Scheme 1.}
\label{tab1}
\end{small}
\end{table}

\vskip5pt
\noindent {\it Results.} 
For DA-QDA and IIS-SQDA, we report results only in low dimensions due to their prohibitive computational cost in high dimensions. In particular, DA-QDA takes around 120 minutes for $p=1024$, while IIS-SQDA requires 30 minutes for $p=2048$.

In Scheme 1, one can observe significant differences in the mean vectors, traces and sum of the off-diagonal elements of the covariance matrices, which increase with dimensionality.
Consequently,  all methods achieve comparable performance in high-dimensional settings, with ppQDA demonstrating the best overall accuracy. Among the RPE-based methods, RPE-SN performs marginally better than the other~two.


\begin{table}[ht]
\centering
    \renewcommand*{\arraystretch}{1.2}
    \begin{small}
\begin{tabular}{| c | c  c  c  c c c |}
  \hline
 Methods & $p = 512$ & $p = 1024$ & $p = 2048$ & $p = 4096$ & $p = 8192$ & $p = 10000$ \\ \hline
  HDDA & {\bf 0.00} (0.00)  & 0.50 (0.02)  & 0.29 (0.05)  & 0.44 (0.03)  & 0.30 (0.04)  & 0.20 (0.04)  \\
  AoYa & 0.50 (0.03)  & 0.50 (0.02)  & 0.50 (0.02)  & 0.50 (0.03)  & 0.50 (0.03)  & 0.50 (0.02)  \\
  ppQDA & 0.48 (0.03)  & 0.49 (0.03)  & 0.49 (0.02)  & 0.49 (0.03)  & 0.49 (0.03)  & 0.49 (0.02)  \\
  DA-QDA & 0.42 (0.08)  & -  & - & - & - & -  \\
  IIS-SQDA & 0.05 (0.02)  & 0.06 (0.02)  & 0.09 (0.02) & - & - & -  \\
  RPE-CS & 0.01 (0.01)  & 0.03 (0.01)  & 0.02 (0.01)  & 0.01 (0.01)  & 0.02 (0.01)  & 0.02 (0.01)  \\
  RPE-SN & {\bf 0.00} (0.00)  & {\bf 0.01} (0.00)  & {\bf 0.00} (0.00)  & {\bf 0.00} (0.00)  & {\bf 0.00} (0.00)  & {\bf 0.00} (0.00)  \\
  RPE-STP & {\bf 0.00} (0.00)  & {\bf 0.01} (0.00)  & {\bf 0.00} (0.00)  & {\bf 0.00} (0.00)  & {\bf 0.00} (0.00)  & {\bf 0.00} (0.00)  \\
   \hline
   $\mathrm{KL}/p$ & 1.05  & 0.60  & 0.89  & 0.64  & 0.72  & 0.65
\\ \hline
\end{tabular}
\caption{Mean and standard deviation (in brackets) of the empirical misclassification probabilities for different methods in Scheme 2.}
\label{tab2}
\end{small}
\end{table}

In Scheme 2, the traces of $\Sigma_1$ and $\Sigma_2$ are identical and the mean vectors coincide. However, both the sum of the off-diagonal elements of $\Sigma_1$ and $\Sigma_2$, along with the KL divergence increases with dimensionality. The absence in mean and trace differences leads to the poor performance of AoYa. Despite of a super-linear growth in the difference of the off-diagonal sums with dimension, ppQDA performs poorly in this setting. In fact, the classification accuracy of both AoYa and ppQDA approaches that of random guessing.
Although HDDA performs somewhat better than both AoYa and ppQDA, its accuracy remains unsatisfactory. 
The primary reason behind this might be the underlying assumption that both the leading eigenvalues and the intrinsic dimensions of the covariance matrices are identical, which is clearly not true here. 
In contrast, the RPE-based methods exhibit superior performance since the KL divergence increases with dimension. both RPE-SN and RPE-STP consistently achieve the lowest average empirical misclassification probabilities throughout.

\begin{table}[h]
\centering
    \renewcommand*{\arraystretch}{1.2}
    \begin{small}
\begin{tabular}{| c | c  c  c  c c c |}
  \hline
 Methods & $p = 512$ & $p = 1024$ & $p = 2048$ & $p = 4096$ & $p = 8192$ & $p = 10000$ \\
  \hline
  HDDA & 0.43 (0.01)  & 0.11 (0.01)  & 0.20 (0.04)  & 0.28 (0.04)  & 0.31 (0.03)  & 0.23 (0.04)  \\
  AoYa & 0.50 (0.02)  & 0.50 (0.03)  & 0.50 (0.02)  & 0.50 (0.02)  & 0.50 (0.02)  & 0.50 (0.02)  \\
  ppQDA & 0.01 (0.00)  & {\bf 0.00} (0.00)  & {\bf 0.00} (0.00)  & {\bf 0.00} (0.00)  & {\bf 0.00} (0.00)  & {\bf 0.00} (0.00)  \\
  DA-QDA & 0.01 (0.03)  & -  & - & - & - & - \\
  IIS-SQDA & {\bf 0.00} (0.00)  & {\bf 0.00} (0.00)  & {\bf 0.00} (0.00) & - & - & - \\
 $\mathrm{RPE\text{-}CS}$ & 0.04 (0.02)  & 0.02 (0.02)  & 0.01 (0.00)  & {\bf 0.00} (0.00)  & {\bf 0.00} (0.00)  & {\bf 0.00} (0.00)  \\
  RPE-SN & 0.08 (0.01)  & 0.05 (0.01)  & 0.04 (0.01)  & 0.03 (0.00)  &  0.02 (0.00)  & 0.02 (0.00)  \\
  RPE-STP & 0.08 (0.01)  & 0.05 (0.01)  & 0.04 (0.01)  & 0.03 (0.01)  & 0.02 (0.01)  & 0.02 (0.01)  \\
   \hline
KL$/p$ & 0.02  & 0.02  & 0.02  & 0.02  & 0.02  & 0.02  \\ \hline
\end{tabular}
\caption{Mean and standard deviation (in brackets) of the empirical misclassification probabilities for different methods in Scheme 3.}
\label{tab3}
\end{small}
\end{table}

While the mean vectors do not differ in Scheme 3, the difference in the traces of the covariance matrices increases linearly with $p$. Under this scenario, the discriminant function of AoYa yields highly positive values, assigning all test points to a single class. Consequently, its empirical misclassification probability remains 50\% across all values of $p$. As the differences in traces, the sums of off-diagonal elements, eigenvalues and related characteristics of $\Sa$ and $\Sigma_{2}$ increase with $p$, this setup is particularly favorable for methods like ppQDA and IIS-QDA.
It is worth noting that the covariance structure in Scheme 3 does not align with our assumption (A2). Nevertheless, the proposed RPE-QDA methods perform quite well in this setting. Among the RPE-based methods, RPE-CS achieves the best overall performance.

\begin{table}[ht]
\centering
    \renewcommand*{\arraystretch}{1.2}
    \begin{small}
\begin{tabular}{| c | c  c  c  c c c|}
  \hline
 Methods & $p = 512$ & $p = 1024$ & $p = 2048$ & $p = 4096$ & $p = 8192$ & $p = 10000$ \\ \hline
  HDDA & {\bf 0.00} (0.00)  & {\bf 0.00} (0.00)  & 0.29 (0.25)  &  0.43 (0.15)  & 0.32 (0.22)  &  0.23 (0.22)  \\
  AoYa & 0.50 (0.03)  & 0.49 (0.02)  & 0.50 (0.03)  & 0.48 (0.02)  & 0.48 (0.03)  & 0.49 (0.02)  \\
  ppQDA & 0.50 (0.02)  & 0.50 (0.02)  & 0.50 (0.03)  & 0.47 (0.03)  & 0.48 (0.02)  & 0.49 (0.02)  \\
  DA-QDA & {\bf 0.00} (0.01)  & -  & -  & - & - & -\\
  IIS-SQDA &  {\bf 0.00} (0.00)  & 0.01 (0.02)  & - & - & - & -  \\
  RPE-CS & 0.01 (0.01)  & 0.01 (0.01)  & 0.04 (0.02)  &  0.05 (0.02)  &  0.11 (0.03)  &  0.13 (0.02)  \\
  RPE-SN & {\bf 0.00} (0.00)  & {\bf 0.00} (0.00) & {\bf 0.00} (0.01)  & {\bf 0.00} (0.02)  & 0.01 (0.01)  & 0.02 (0.01)  \\
  RPE-STP & {\bf 0.00} (0.00)  & {\bf 0.00} (0.00)  & {\bf 0.00} (0.00)  & {\bf 0.00} (0.00)  & {\bf 0.00} (0.00)  & {\bf 0.00} (0.00)  \\
   \hline
KL$/p$ & 1.31  & 1.45  & 1.63  & 1.77  & 1.95 & 2.00  \\ \hline
\end{tabular}
\caption{Mean and standard deviation (in brackets) of the empirical misclassification probabilities for different methods in Scheme 4.}
\label{tab4}
\end{small}
\end{table}

In Scheme 4, the covariance matrix $\Sa$ has the same set of eigenvalues as that of $\Sigma_2$, but with different eigen-directions. Neither the trace nor the sum of the off-diagonal elements captures this structural difference. Moreover, the mean vectors are identical. As a result, both ppQDA and AoYa yield a misclassification rate of about $50\%$. In contrast, both RPE-SN and RPE-STP perform significantly better, as the KL divergence increases with the data dimension. RPE-CS also yields a moderate misclassification probability.

To summarize, RPE-based methods consistently demonstrate good performance across the evaluated schemes. We can also observe that our proposed classifiers yield almost \textit{perfect classification} as $p$ is large, across all schemes.
Clearly, this effectiveness in their performance persists as long as the KL divergence increases with $p$
(recall assumption (\Ath)) regardless of the variation in location parameters, or summary measures of the covariance matrices.

\subsection{Computational Time} 
We conclude this section with a comparison of computational time among the methods studied in this section.
Figure \ref{fig:time} depicts the computational time for all methods in Scheme $4$ across varying data dimensions. The left panel shows the overall time range (excluding DA-QDA and IIS-SQDA), while the right panel highlights finer differences among the fast methods.

To evaluate the scalability of the methods in an uniform way, the computational times are presented without any parallelization. 
All computations are performed on a $1.50$ GHz Intel Core i7 system with $16$ GB of memory and $8$ cores.

\begin{figure}[!h]
\centering
\includegraphics[width=0.48 \textwidth]{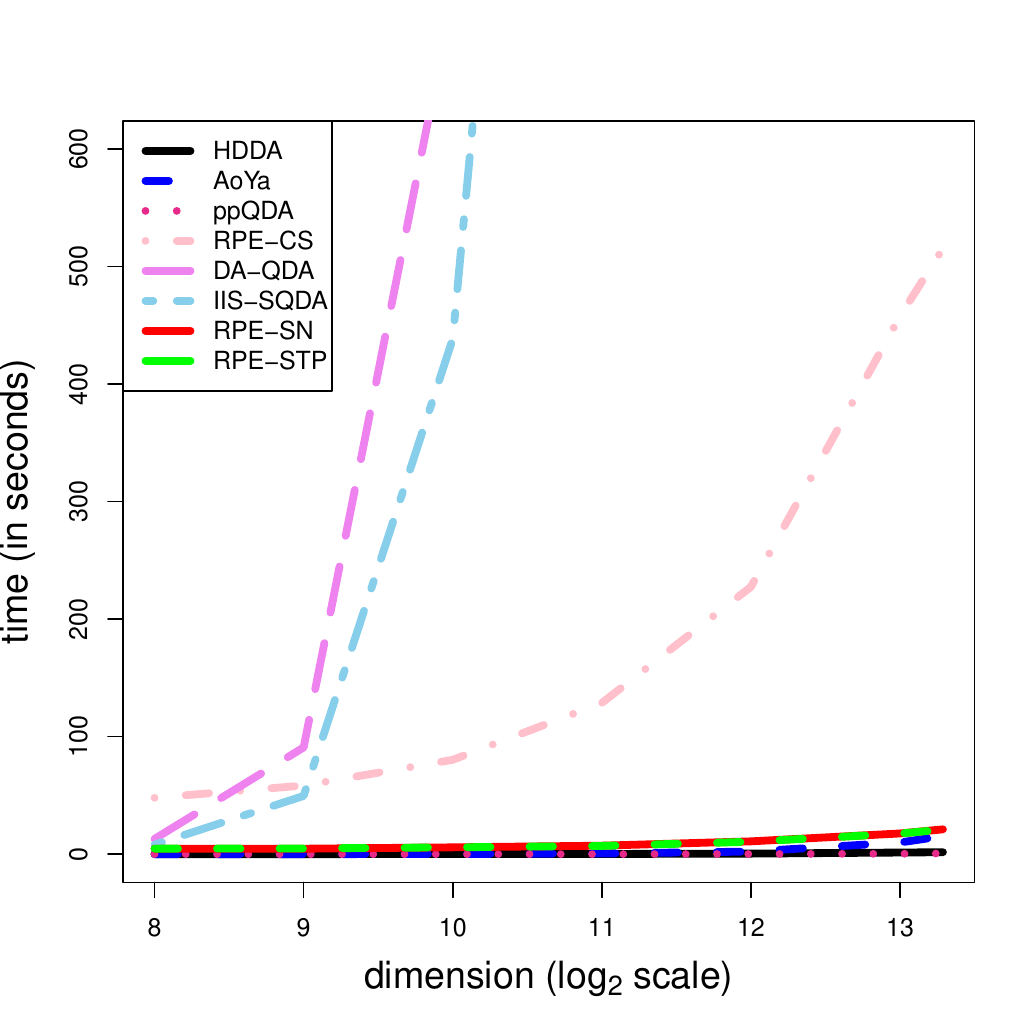}  
\includegraphics[width=0.48 \textwidth]{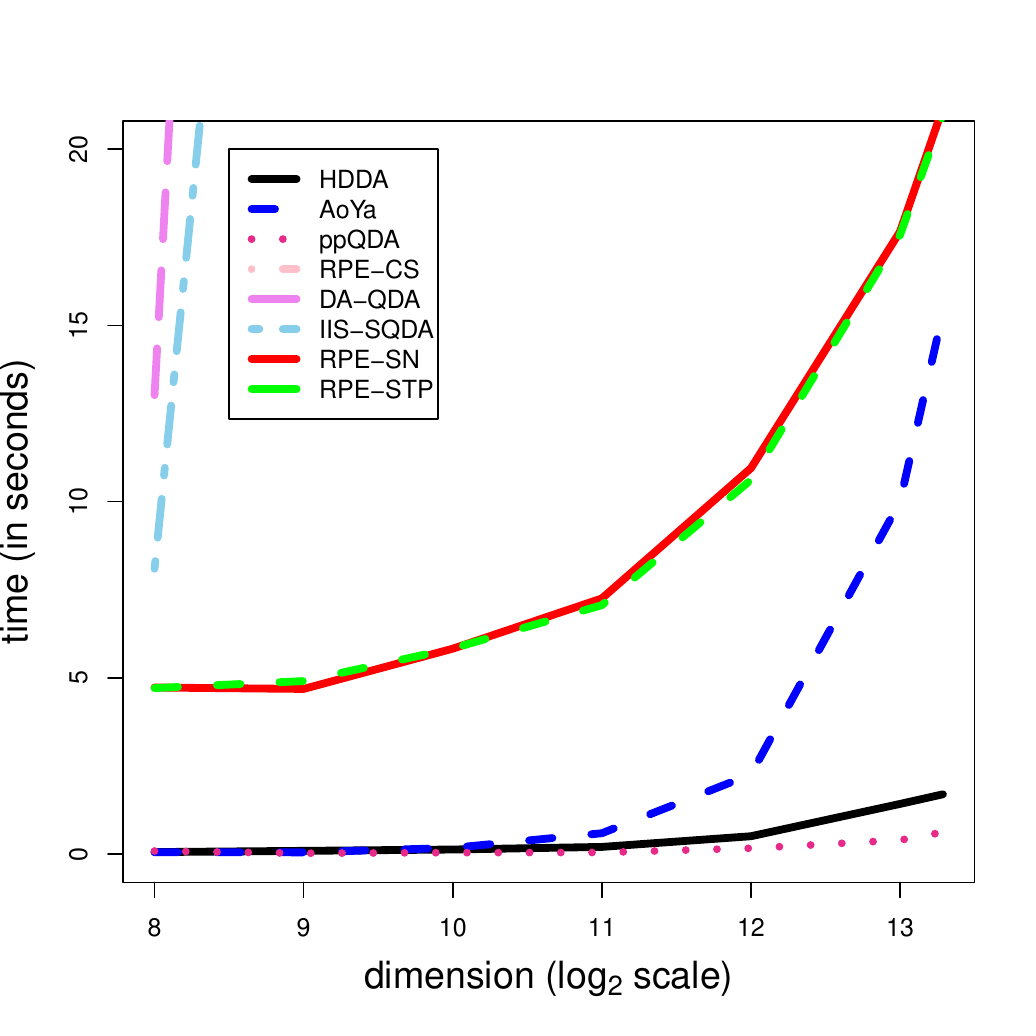} 
\caption{Time comparison for the competing methods in Scheme $4$. Left: full scale; Right: zoomed in to highlight the computationally fast methods.}
\label{fig:time}
\end{figure}

Figure \ref{fig:time} illustrates that the time complexity of DA-QDA and IIS-SQDA increases substantially as the dimensionality of the data rises. In contrast, HDDA, AoYa and ppQDA demonstrate exceptional computational efficiency. RPE-SN and RPE-STP exhibit a marginal increase in computational time compared to AoYA, HDDA and ppQDA. Nevertheless, both RPE-SN and RPE-STP remain significantly more efficient than RPE-CS. Among the competing methods (excluding DA-QDA and IIS-SQDA), RPE-CS shows the steepest increase in time complexity with increasing dimensions. Interestingly, RPE-STP utilizes a sparse choice of random matrices and yet its computational cost is comparable with that of RPE-SN.

\section{Real Data Analysis}\label{sec:realdata-analysis}
We now analyze four data sets related to cancer gene expressions available in the repository of data sets studied in \cite{de2008clustering} (available \href{https://schlieplab.org/Static/Supplements/CompCancer/datasets.htm}{here}) and the Bioinformatics laboratory of University of Ljubljana (available \href{http://www.biolab.si/supp/bi-cancer/projections/}{here}). Each data set is high-dimensional, where the number of genes is in thousands, while the sample sizes are generally fewer than $100$. In view of the small sample sizes, we use leave-one-out cross-validation (LOOCV) to calculate the misclassification rates for each~classifier. 
We first provide a short description of the data~sets.


\vskip10pt
\noindent{\bf Prognosis Data.}
This data set (\href{https://file.biolab.si/biolab/supp/bi-cancer/projections/info/AMLGSE2191.html}{PRG}) assesses $54$ acute myeloid leukemia patients based on their prognosis after treatment (remission or relapse of disease), using data from $12625$ genes. It consists of two classes, namely patients who achieved complete remission after treatment (REM) and patients who experienced relapse (REL) with $28$ and $26$ samples, respectively. 

\vskip10pt
\noindent{\bf Mixed-lineage Leukemia Data.}
The mixed-lineage leukemia
(\sloppy{\href{https://file.biolab.si/biolab/supp/bi-cancer/projections/info/MLL.html}{MLL}}) data set contains expressions of $12533$ genes for $72$ patients. It includes three diagnostic classes, namely acute lymphoblastic leukemia (ALL), acute myeloid leukemia (AML) and mixed-lineage leukemia (MLL) with $24$, $28$ and $20$ samples, respectively.

\vskip10pt
\noindent{\bf Gastric Cancer Data.}
This data set (\href{https://file.biolab.si/biolab/supp/bi-cancer/projections/info/gastricGSE2685.html}{GC}) contains gene expression signatures for three different gastric tumors, namely, diffuse gastric tumor (DGT), intestinal gastric tumor (IGT) as well as normal gastric tissue (NGT) with sample sizes $5$, $17$ and $8$, respectively. The diagnosis is based on $4522$~genes.

\vskip10pt
\noindent{\bf Brain Tumor Data.}
This data set (\href{https://schlieplab.org/Static/Supplements/CompCancer/Affymetrix/nutt-2003-v1/index.html}{BT}) consists of $50$ samples from four diagnostic classes, namely, classic glioblastomas (CG), classic anaplastic oligodendrogliomas (CO), nonclassic glioblastomas (NG), nonclassic anaplastic oligodendrogliomas (NO) representing different brain tumors of the central nervous system. The diagnosis is based on $12625$ gene expression signatures. Further, there are $14$ samples from each of the classes CG and NG, $7$ samples from the class CO and $15$ samples from the class NO.

\vskip5pt

\begin{table}[ht]
\footnotesize
\centering
\renewcommand*{\arraystretch}{1.2}
\begin{tabular}{| c | c | c  c  c c c c c c|}
  \hline
Data & $(p,n) $ & HDDA & AoYa & ppQDA & RPE & RPE & RPE & RPE & RPE \\ 
 & $J$ &  & &  & -CS & -SN & -SN2 & -STP & -STP2 \\
  \hline
PRG & (12625, 54) & 0.407 & 0.426 & 0.482 & 0.389 & 0.333 & 0.148 & 0.444 & \textbf{0.111} \\
& 2 & (0.067) & (0.067) & (0.068) & (0.066) & (0.064) & (0.048)  & (0.068) & (0.043)\\
\hline
MLL & (12533, 72) & 0.056 & 0.083 & 0.667 & 0.056 & \textbf{0.027} & 0.056 & 0.083 & 0.056 \\
& 3 & (0.027) & (0.033) & (0.056) & (0.027) & (0.019) & (0.027) & (0.033) & (0.027) \\

\hline
GC & (4522, 30) & 0.367 & 0.600 & 0.333 & - & 0.333 & \textbf{0.267} & 0.333 & 0.300 \\
& 3 & (0.088) & (0.089) & (0.086) & - & (0.086) & (0.081) & (0.086) & (0.084) \\

\hline

BT & (12625, 50) & 0.320 & 0.460 & 0.700 & 0.380 & 0.340 & 0.380 & \textbf{0.280} & 0.320 \\
& 4 & (0.066) & (0.070) & (0.065) & (0.069) & (0.067) & (0.069) & (0.063) & (0.066) \\

\hline

\hline

\end{tabular}
\caption{Mean and standard deviation (in brackets) of empirical misclassification probabilities for the competing methods in real data sets.}
\label{table:realdata}
\end{table}

The methods DA-QDA and IIS-SQDA could not be implemented on these data sets in view of their heavy computational burden for such high-dimensional data. The results for RPE-CS have also been excluded for the gastric cancer data set because the corresponding R package (\href{https://cran.rproject.org/web/packages/RPEnsemble/index.html}{RPEnsemble}) does not support implementation of QDA for small class-wise sample sizes. For RPE-CS, we used $d = 5$, $B_1 = 500$ and $B_2 = 50$ (as recommended by \cite{cannings2017random}). For RPE-SN and RPE-STP, we have taken $B = 200$ and two choices of $d$, namely, $2$ and $(n_{\min} \wedge \lceil\log p\rceil )$.  
The results corresponding to the latter choice are indicated as $\mathrm{RPE\text{-}SN2}$ and $\mathrm{RPE\text{-}STP2}$, respectively.

Table \ref{table:realdata} summarizes the performance of the competing methods in analyzing the above data sets. It clearly indicates that RPE-based methods generally outperform HDDA, AoYa and ppQDA across all these real data sets. 
Computation of HDDA involves the estimation of several parameters for each class (see \cite{bouveyron2007high}).
The high misclassification rates of HDDA in some data sets may stem from incorrect estimation of these parameters.
Further, the underlying data generating process of the data sets under consideration might be too complex to effectively capture the discriminative information via selected summary measures, resulting in suboptimal performance of AoYa and ppQDA.  

\begin{figure}[h]
    \centering
    \begin{subfigure}[b]{0.45\linewidth}
        \centering
        \includegraphics[width=1\linewidth]{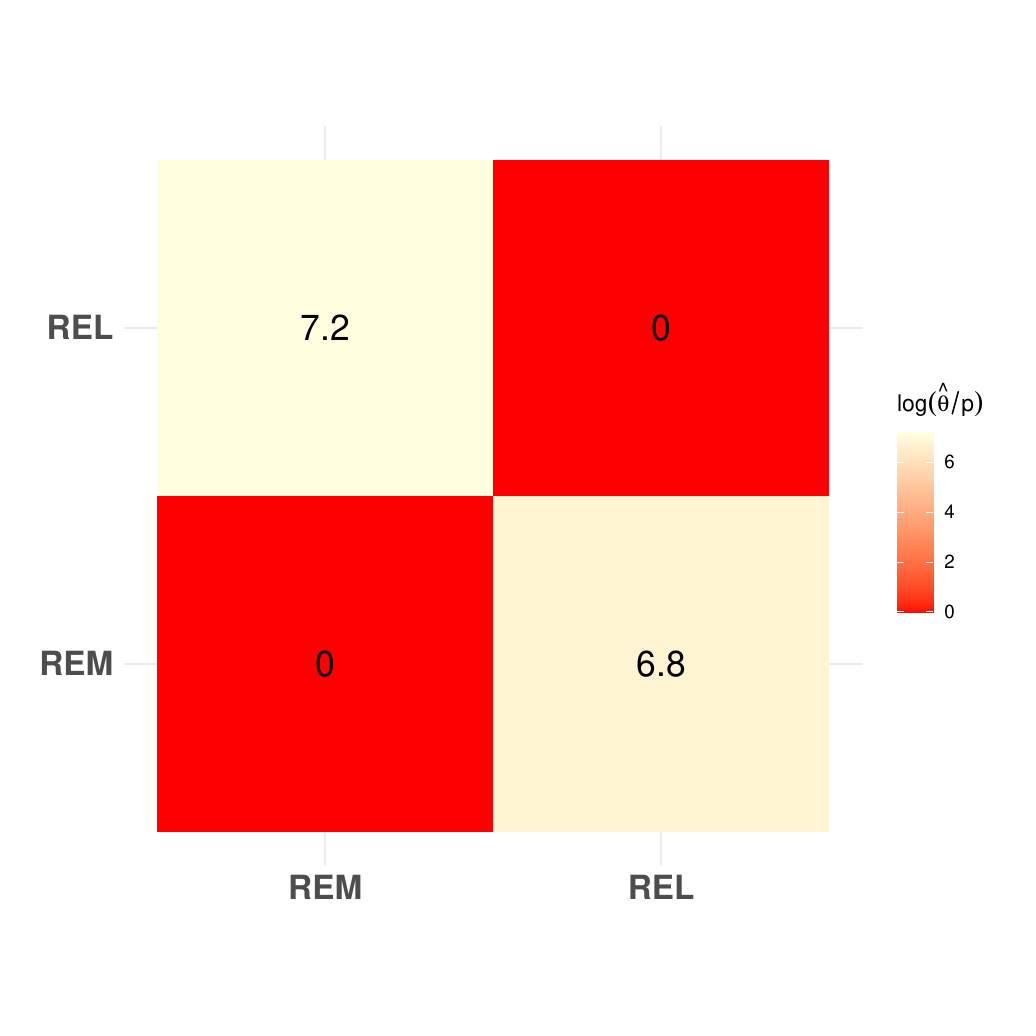}
        \caption{PRG}
        \label{fig:2KLBrain}
    \end{subfigure}
    \hfill
    \begin{subfigure}[b]{0.45\linewidth}
        \centering
        \includegraphics[width=1\linewidth]{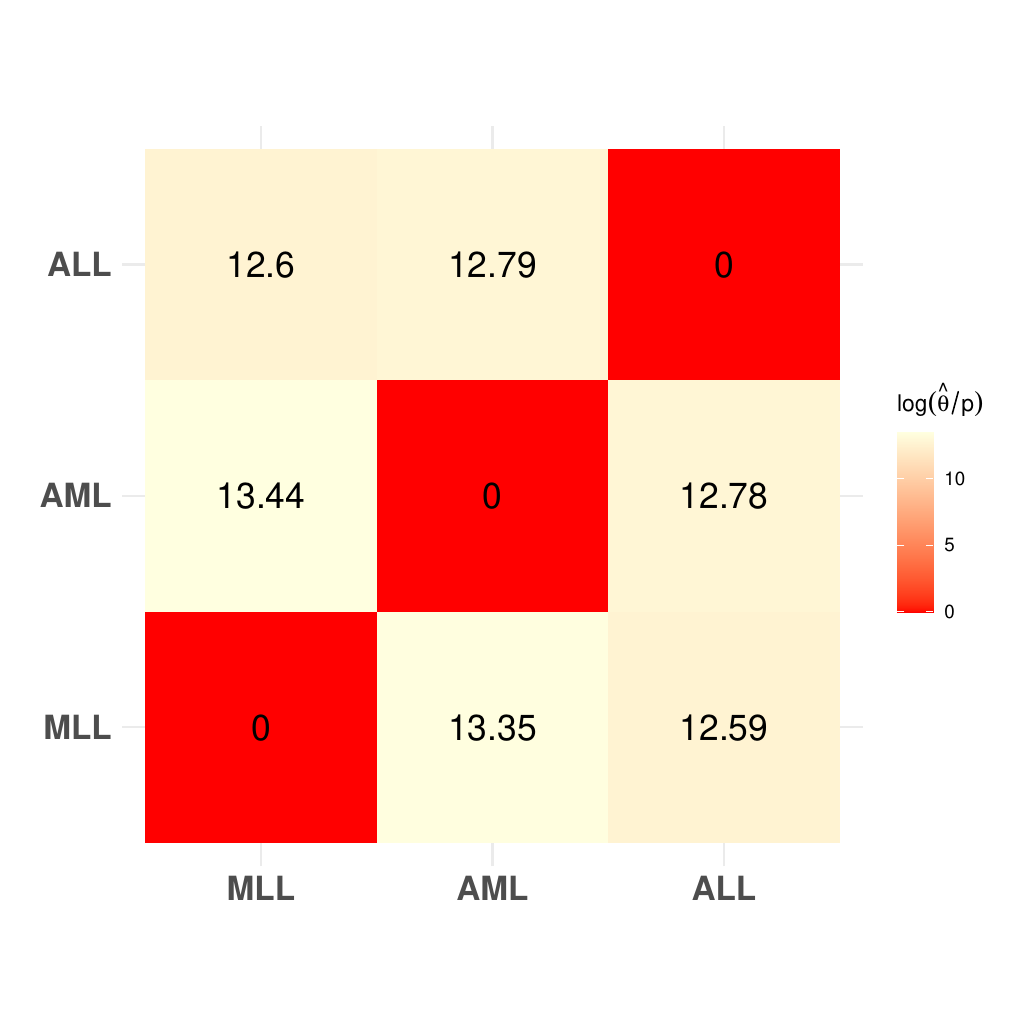}
        \caption{MLL}
        \label{fig:2KLLung}
    \end{subfigure}
    \hfill
    \begin{subfigure}[b]{0.45\linewidth}
        \centering
        \includegraphics[width=1\linewidth]{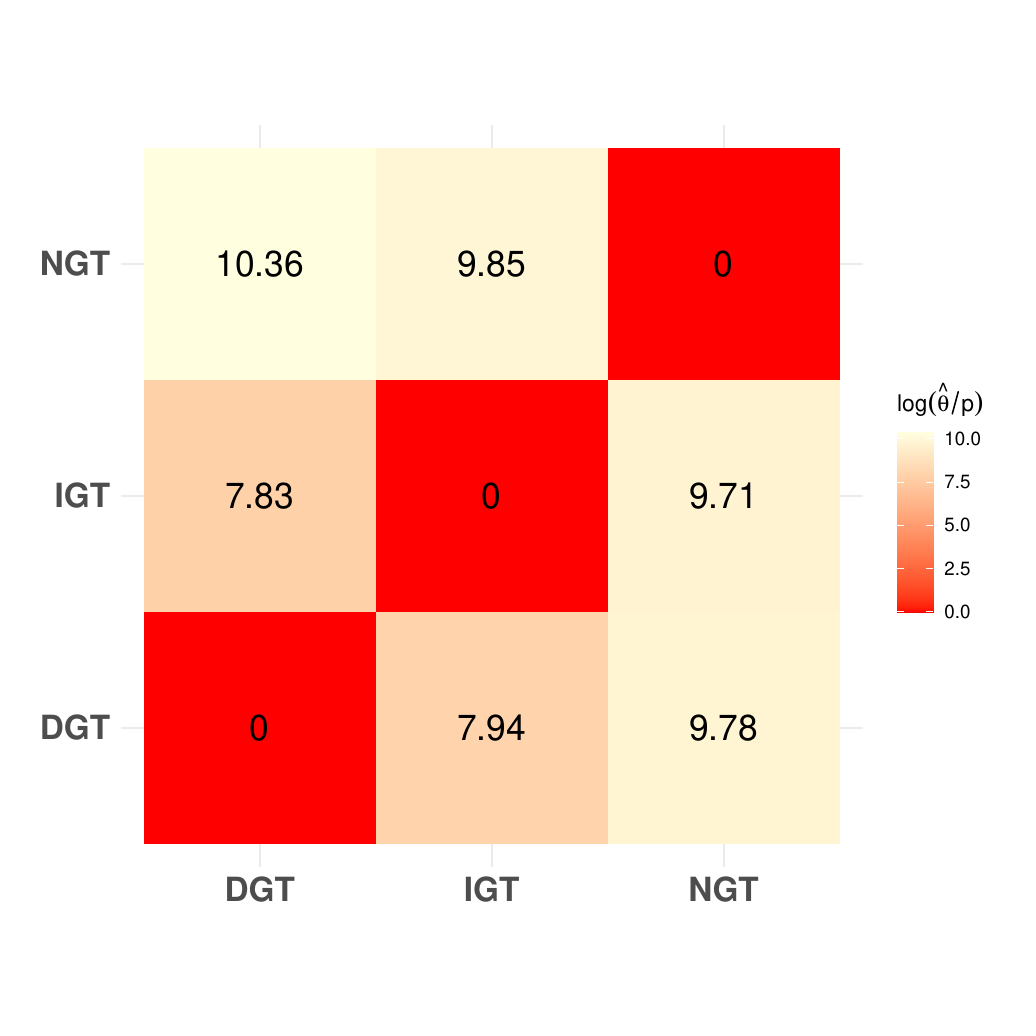}
        \caption{Gastric Cancer}
        \label{fig:2KLMLL}
    \end{subfigure}
    \hfill
    \begin{subfigure}[b]{0.45\linewidth}
        \centering
        \includegraphics[width=1\linewidth]{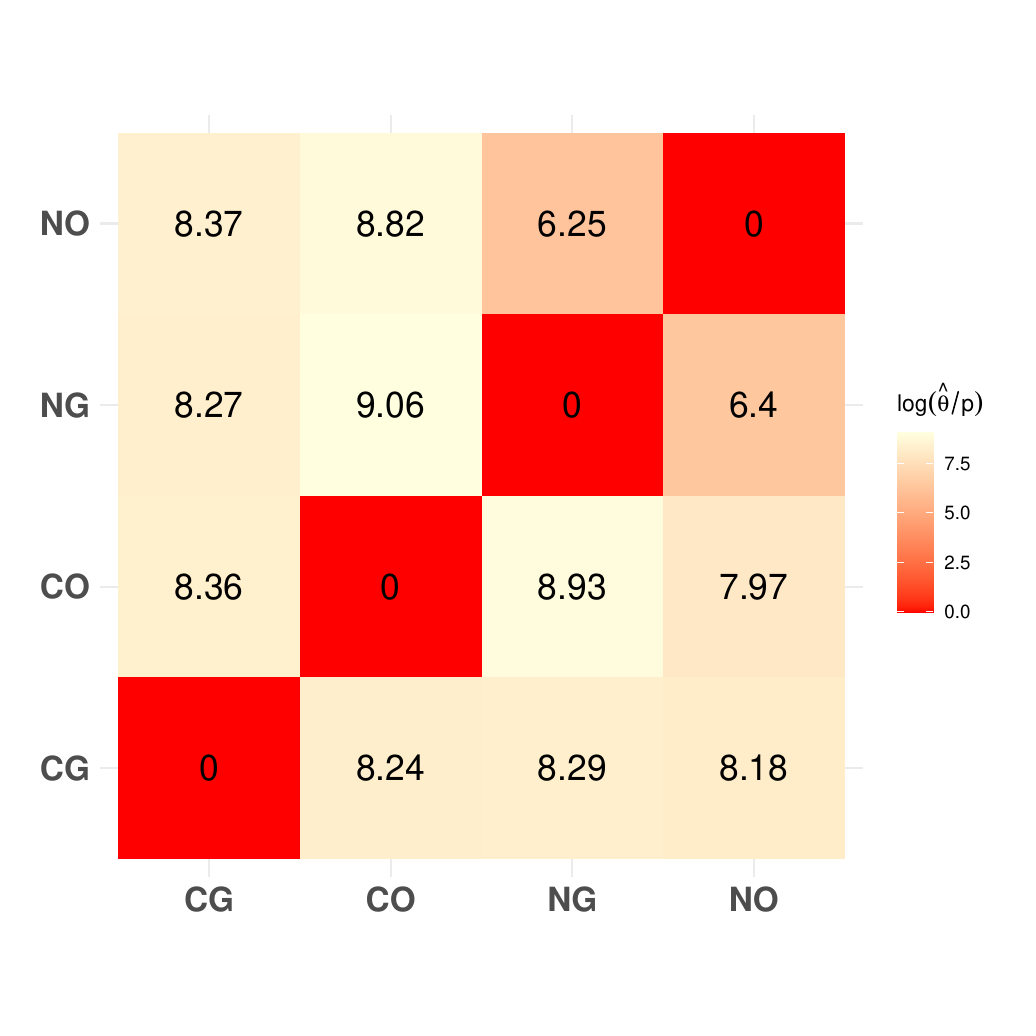}
        \caption{Brain Tumor}
        \label{fig:2KLBreast}
    \end{subfigure}
    \hfill
     \caption{Heatmaps representing the pairwise values of $\log(\widehat{\theta}_{k,\kp}/p)$ for the four real data~sets.}
     \label{fig2:KL}
\vspace{-0.1in}
\end{figure}


We now examine the real data sets in a finer detail. To assess class separability, we estimate the KL divergence between populations. Note that direct estimation using sample estimates of the parameters is computationally infeasible in high dimensions. Therefore, in Section \ref{subsec:App-Numeric}, we derive a lower bound for the KL divergence between classes $k$ and $\kp$, denoted $\theta_{k,\kp}$, i.e., $\mathrm{KL}_{k,\kp} \geq \theta_{k,\kp}$ for all pairs $(k, \kp)$. We then obtain an estimate of this bound, denoted $\widehat{\theta}_{k,\kp}$, and present a heatmap of $\log(\widehat{\theta}_{k,\kp}/p)$ for each pair of classes in Figure \ref{fig2:KL}.
The plots show that the ratio of $\mathrm{KL}_{k,\kp}$ to $p$ is expected to be large for each pair of classes across all data sets, suggesting a significant separation between them. This aligns well with assumption (\Ath) and justifies the good performance of the methods based on RPE.

\begin{figure}[h]
    \centering
    \begin{subfigure}[b]{0.45\linewidth}
        \centering
        \includegraphics[width=1\linewidth]{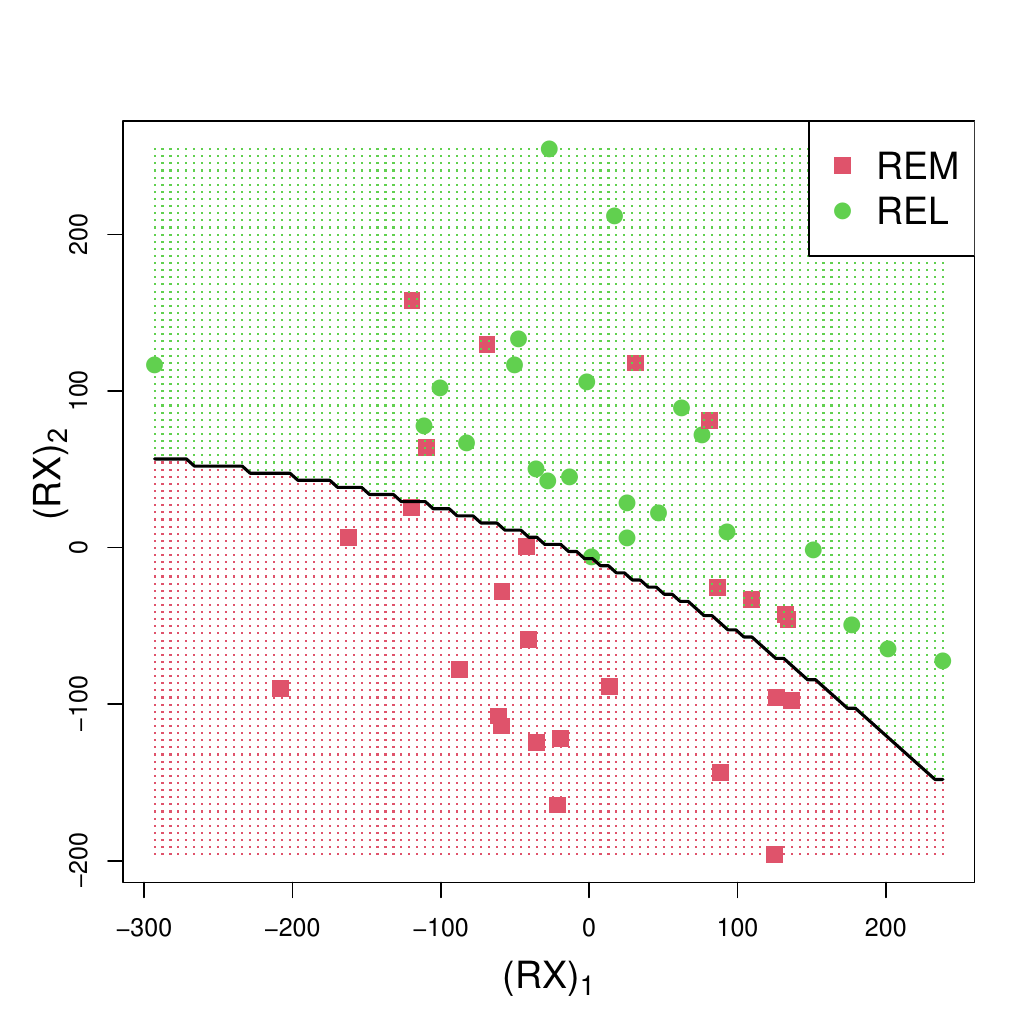}
        \caption{PRG}
        \label{fig:2dimbreast}
    \end{subfigure}
    \hfill
    \begin{subfigure}[b]{0.45\linewidth}
        \centering
        \includegraphics[width=1\linewidth]{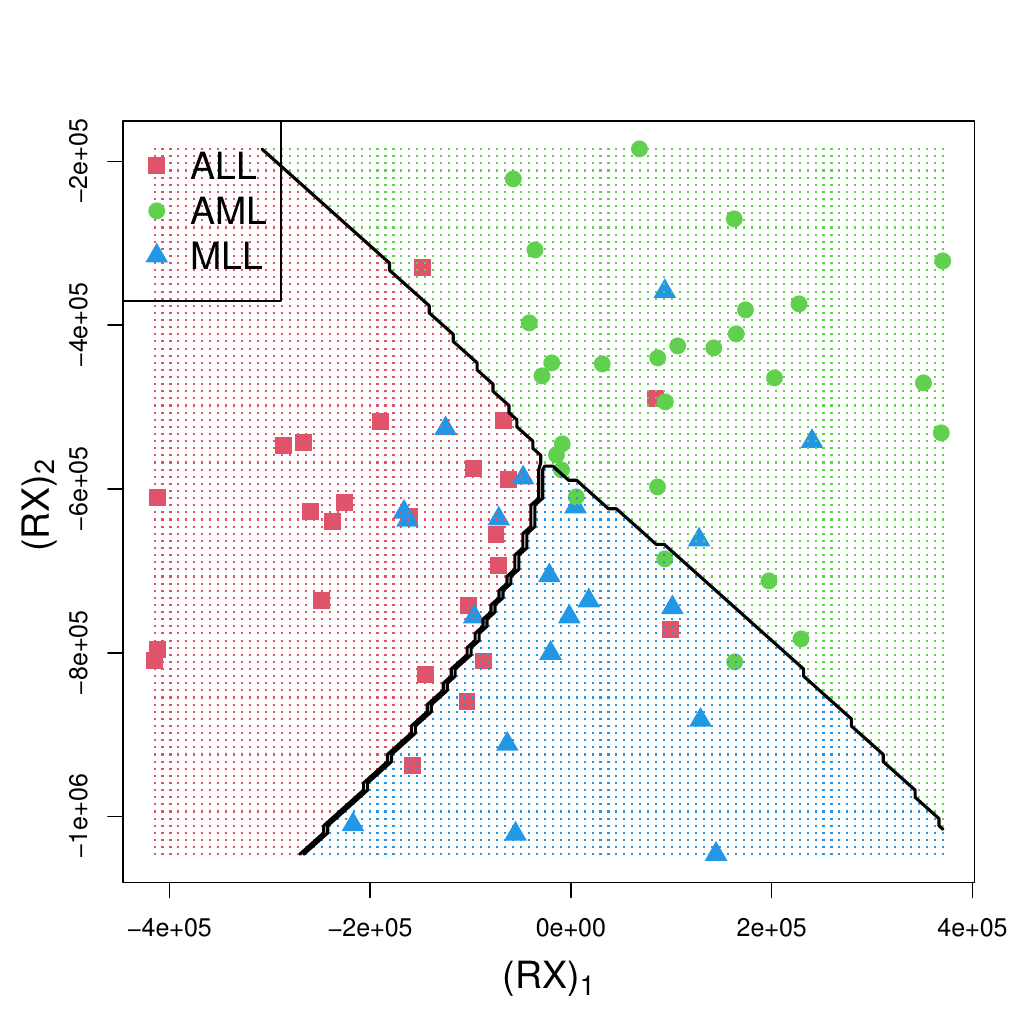}
        \caption{MLL}
        \label{fig:2dimMLL}
    \end{subfigure}
    \hfill
    \begin{subfigure}[b]{0.45\linewidth}
        \centering
        \includegraphics[width=1\linewidth]{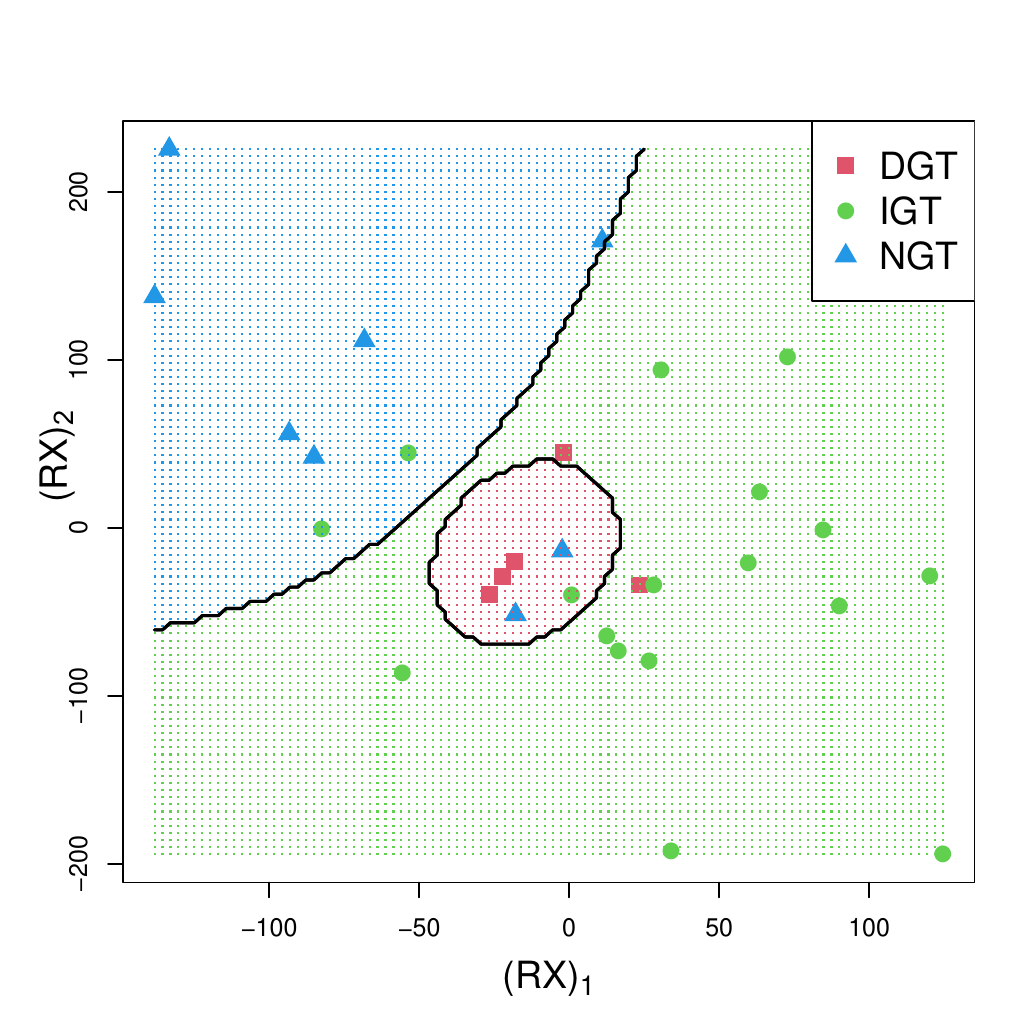}.
        \caption{Gastric Tumor}
        \label{fig:2dimbrain}
    \end{subfigure}
    \hfill
    \begin{subfigure}[b]{0.45\linewidth}
    
        \centering
        \includegraphics[width=1\linewidth]{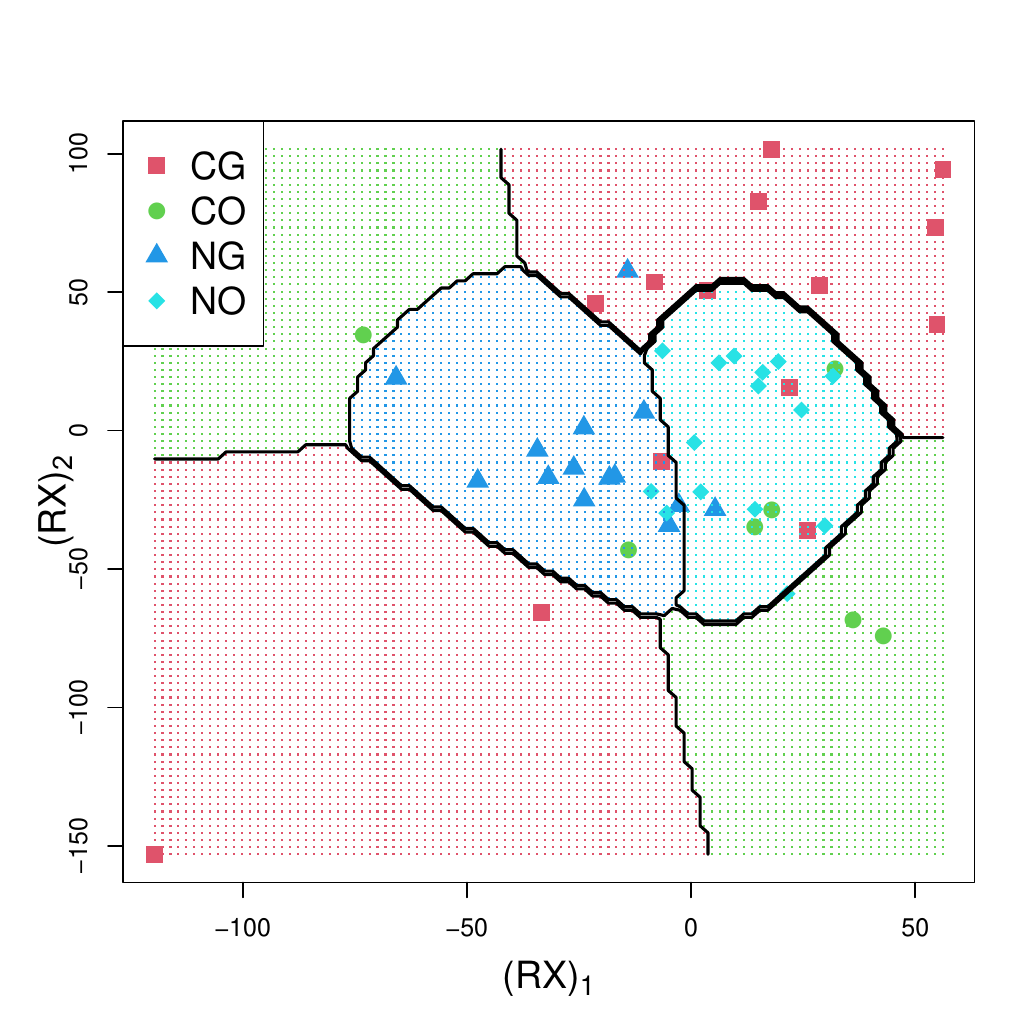}
        \caption{Brain Tumor}
        \label{fig:2dimlung}
    \end{subfigure}
    \caption{QDA boundaries after random projection using SN random matrices in $\mathbb{R}^2$.}
    \label{fig:2dimRP}
    \vspace{-0.1in}
\end{figure}

In Figure \ref{fig:2dimRP}, we project each data onto the two-dimensional plane using a standard normal (SN) random matrix. 
Using one random matrix, we draw the RP-QDA class boundaries (see equation \eqref{eqn:RPDF}) for each data set for visualizations. 
It is evident from Figure \ref{fig:2dimRP} that the classes are quite well separated even in a randomly projected $2$-dimensional subspace. 
Overall, our data analysis establishes quite clearly that methods based on random projections are generally effective for ultrahigh-dimensional classification~problems.

\section{Discussion}

Quadratic Discriminant Analysis (QDA) is a widely used classification method.
However, its utility diminishes in high-dimensional scenarios due to computational intractability.
To address these challenges, various extensions to classical QDA have been proposed under different structural assumptions on population parameters. For ultrahigh-dimensional settings, we propose a random projection ensemble (RPE) based approach, termed as RPE-QDA. The proposed RPE-QDA method achieves perfect classification as both the data dimension and sample size increase. {\it Perfect classification} is retained even when the dimensionality grows at a sub-exponential rate relative to the sample size. Through extensive simulations and real data analyses, we demonstrate the superior performance of RPE-QDA compared to several existing methods. Notably, in comparisons with another RPE-based method proposed by \cite{cannings2017random}, RPE-QDA achieves comparable classification performance
in significantly less computational time.

This study opens several promising avenues for future research. First, ultrahigh-dimensional data often exhibit sparsity, with classification information concentrated in a small subset of variables. In such cases, a targeted random projection approach (see, e.g., \cite{mukhopadhyay2020targeted}), which combines variable screening with RPE, could provide a more tailored solution. Second, while quadratic classifiers are applicable beyond Gaussian populations, our real data analyses suggest that RPE-QDA performs well across a broader range of distributions. Extending our theoretical framework to accommodate for more general distributions would be a valuable contribution. Finally, the RPE framework is highly versatile and definitely not limited to QDA. It could be integrated with other computationally demanding classification methods such as neural networks or random forests in ultrahigh dimensions. Exploring these potential extensions of the RPE methodology presents some interesting directions of future~work.

\section{Appendix}\label{sec:App}
\subsection{Proof of the Theorems}
This part contains the proofs of all the theorems stated in Section \ref{sec:theory}.
We first establish the following lemma, which simplifies a multi-class discriminant problem in terms of two class problems. The proof of this lemma is provided in the Supplementary.

\begin{lemma}\label{lm:0}
For the $\delta^{\mathrm{QDA}}$ classifier, the misclassification probability in (\ref{eqn:MP_population}) can be bounded as 
\begin{eqnarray}
 0 \leq \Delta^{\mathrm{QDA}} \leq \sum_{k=1}^{J} \pi_k \sum_{\kp \neq k} \mathbb{P}\left(D_{\kp,k}(\bZ) > 0 \mid \bZ \in P_{k} \right) .
   \label{eqn:twoclass}
\end{eqnarray}
Similar bounds on the misclassification probability can be obtained for the classifiers $\delta^{\text{RPE-QDA}}$ and $\delta_n^{\text{RPE-QDA}}$ by substituting $D_{\kp, k}$ with $D^{\text{RPE}}_{\kp, k}$ (see equation \eqref{eqn:DRPE}) and $\widehat{D}^{\text{RPE}}_{\kp,k}$ (an estimate of $D^{\text{RPE}}_{\kp, k}$), respectively, in equation \eqref{eqn:twoclass}.
\end{lemma}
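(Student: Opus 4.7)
The statement is a fairly direct consequence of a union bound, exploiting the structure of the classifier in \eqref{eqn:Classif2}. The plan is to first express the event of misclassification in terms of the discriminant functions $D_{\kp,k}$, apply Boole's inequality, and then aggregate the class-conditional bounds against the priors. The lower bound $0 \leq \Delta^{\mathrm{QDA}}$ is immediate from non-negativity of probability, so the work lies entirely in the upper bound.

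Fix $k \in \{1,\ldots,J\}$ and condition on $\bZ \in P_k$. By definition of $\delta^{\mathrm{QDA}}$ in \eqref{eqn:Classif2}, one has $\delta^{\mathrm{QDA}}(\bZ) = \kp$ for some $\kp \neq k$ precisely when $D_{\kp, j}(\bZ) > 0$ for all $j \neq \kp$, which in particular forces $D_{\kp, k}(\bZ) > 0$. Hence the misclassification event decomposes as
\begin{equation*}
\{\delta^{\mathrm{QDA}}(\bZ) \neq k\} \;=\; \bigcup_{\kp \neq k} \{\delta^{\mathrm{QDA}}(\bZ) = \kp\} \;\subseteq\; \bigcup_{\kp \neq k} \{D_{\kp,k}(\bZ) > 0\}.
\end{equation*}
(The relation $D_{k,\kp} = -D_{\kp,k}$, which one reads off directly from \eqref{eqn:PDF}, ensures that the classifier in \eqref{eqn:Classif2} is well-defined outside a measure-zero set of ties under the continuous Gaussian model, so the union above covers all misclassifications almost surely.)

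Applying Boole's inequality to the displayed inclusion yields
\begin{equation*}
\Prob\bigl(\delta^{\mathrm{QDA}}(\bZ) \neq k \,\big|\, \bZ \in P_k\bigr) \;\leq\; \sum_{\kp \neq k} \Prob\bigl(D_{\kp,k}(\bZ) > 0 \,\big|\, \bZ \in P_k\bigr).
\end{equation*}
Multiplying by $\pi_k$, summing over $k$, and invoking the definition \eqref{eqn:MP_population} of $\Delta^{\mathrm{QDA}}$ produces the desired bound. For the RPE-QDA and sample RPE-QDA variants, the classifiers are constructed with the same ``one-versus-rest'' logic on top of the discriminant functions $D^{\text{RPE}}_{\kp,k}$ and $\widehat{D}^{\text{RPE}}_{\kp,k}$ (see \eqref{eqn:RPClassif}), and moreover these discriminants also satisfy the antisymmetry $D^{\text{RPE}}_{k,\kp} = -D^{\text{RPE}}_{\kp,k}$ by linearity of the averaging in \eqref{eqn:DRPE}; hence the identical argument applies verbatim with $D_{\kp,k}$ replaced by the corresponding RPE counterpart.

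No genuine obstacle arises here; the only subtlety worth noting in the write-up is that the probability in each case must be taken with respect to the appropriate joint distribution (of $\bZ$ alone for $\delta^{\mathrm{QDA}}$, of $(\bZ, R_1,\ldots,R_B)$ for $\delta^{\text{RPE-QDA}}$, and of $(\bZ, \bX_1,\ldots,\bX_n, R_1,\ldots,R_B)$ for $\delta_n^{\text{RPE-QDA}}$), but the inclusion of events is a pointwise deterministic statement, so the union bound goes through unchanged in all three cases.
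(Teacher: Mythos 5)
Your proof is correct and follows essentially the same route as the paper's (supplementary) argument: the misclassification event under $\bZ \in P_k$ is contained in $\bigcup_{\kp \neq k}\{D_{\kp,k}(\bZ) > 0\}$ via the antisymmetry $D_{k,\kp} = -D_{\kp,k}$, and Boole's inequality plus the prior-weighted sum gives \eqref{eqn:twoclass}, with the identical one-versus-rest logic carrying over to $D^{\text{RPE}}_{\kp,k}$ and $\widehat{D}^{\text{RPE}}_{\kp,k}$. Your attention to the measure-zero tie set and to which joint distribution each probability refers to is appropriate and consistent with the paper's treatment.
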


By Lemma \ref{lm:0}, to show $\Delta^{\mathrm{QDA}} \to 0$ as $p \to \infty$, it is enough to show that $\mathbb{P}\left(D_{k',k}(\bZ) > 0 \mid \bZ \in P_{k} \right) \to 0$, or equivalently, $\mathbb{P}\left(D_{k, \kp}(\bZ) \leq 0 \mid \bZ \in P_{\kp} \right)\to 0$ as $p\to \infty$ for all $(k',k)$.
Since our assumptions are symmetric across all class pairs, we present results for two-class problems only.

\begin{proof}[Proof of Theorem \ref{thm:0}] 
{\bf Part (a).} In a two class problem, we denote the two populations as $P_{0}$ and $P_{1}$.
Consider $D_{0,1}$ as defined in (\ref{eqn:PDF}) with $(k, \kp) = (0,1)$. For notational simplicity, henceforth, we write $D_{0,1} \equiv D$ by dropping its suffix. Observe that 
\begin{eqnarray}
    \Prob_{\bZ} \left( D(\bZ) \leq 0 \mid \bZ \in P_{0} \right) =  \Prob \left( \bZ^{\prime} \Psi \ \bZ + \boldsymbol{\eta}^{\prime} \bZ + \xi \leq 0 \mid \bZ \sim N_p(\bmu_{0} , \So ) \right), \notag  
\end{eqnarray}
where $\Psi = \Sa^{-1} - \So^{-1}$, $ \boldsymbol{\eta} = 2\left(  \So^{-1} \bmu_{0} -  \Sa^{-1} \bmu_{1} \right) $ and $\xi =  \bmu_{1}^{\top} \Sa^{-1} \bmu_{1} -\bmu_{0}^{\top} \So^{-1} \bmu_{0} + \log \det(\Sa) - \log  \det(\So)   - \log (\pi_{1}) + \log(\pi_{0}) $.
Define $W = 2D(\bZ) =  \bZ^{\prime} \Psi \ \bZ + \boldsymbol{\eta}^{\prime} \bZ + \xi$. It is easy to see that 
\begin{eqnarray}
    \Exp_{\bZ}(W) &=& \tr\left( \So \Sa^{-1}\right) - p + \left(\bmu_{1} - \bmu_{0}\right)^{\top} \Sa^{-1} \left(\bmu_{1} - \bmu_{0}\right) + \log  \det \left( \So^{-1}\Sa \right)   - \log \pi_{1} + \log \pi_{0} \notag \\
    & = & 2 \mathrm{KL}_{0,1}  - \log \pi_{1} + \log \pi_{0} \notag.
\end{eqnarray} 
The variances of $\bZ^{\prime} \Psi \bZ$ and $\boldsymbol{\eta}^{\prime} \bZ$ are
$2 \tr \left(\Psi \So \Psi \So \right) + 4 \bmu_{0}^{\top} \Psi \So \Psi \bmu_{0}$
and $ \boldsymbol{\eta}^{\top}\So \boldsymbol{\eta}$, respectively. 
Observe that $\tr \left(\Psi \So \Psi \So \right) \leq  \tr \left( \Sa^{-1} \So \right)^{2} +p$ and
$  \tr \left( \Sa^{-1} \So \right)^{2}  \lesssim p^{\alpha} \tr \left(  \Sa^{-1} \So \Sa^{-1} \right) \sim p^{1 +\alpha}/ \gamma_{1}^{2}$. Thus, 
$\tr \left(\Psi \So \Psi \So \right)\lesssim  o \left(p^{2} \right)$ holds by assumption (\Ao). 
Further, the second term in the variance of $\bZ^{\prime} \Psi \bZ$ can be analyzed as follows: 
\begin{eqnarray*}
    \bmu_{0}^{\top} \Psi \So \Psi \bmu_{0} & \leq & \bmu_{0}^{\top} \Sa^{-1} \So \Sa^{-1} \bmu_{0} + \bmu_{0}^{\top} \So^{-1} \bmu_{0} \leq \|\bmu_{0} \|^{2} \left\{ \LM \left(  \Sa^{-1} \So \Sa^{-1}  \right) + \LM \left( \So^{-1}\right)\right\} \\
    & \lesssim & p^{\alpha} \|\bmu_{0} \|^{2} = o \big(p^{2} \big)
\end{eqnarray*} 
by using assumptions (\At) and (\Ao).
Similarly, it can be shown that 
$ \boldsymbol{\eta}^{\top}\So \boldsymbol{\eta} = o \left(p^{2} \right) $.
Using the Cauchy-Schwarz (CS) inequality, we get 
$\mathrm{cov}_{\bZ} \left(\boldsymbol{\eta}^{\prime} \bZ , \bZ^{\prime} \Psi \bZ\right) = o\left(p^{2} \right)$.
Combining the facts given above, we have $\mathrm{Var}_{\bZ}(W) =  o\left(p^{2} \right)$.

Thus, $p^{-1} |W - \Exp_{\bZ}(W)| \xrightarrow{P} 0$ as $p\to \infty$. The result now follows since $p^{-1} |\Exp_{\bZ}(W) - 2\mathrm{KL}_{0,1}| \to 0$ when $p\to \infty$ and $W = 2 D(\bZ)$.
\vskip10pt

\noindent {\bf Part (b).} Next, consider the sequence of random variables $V_{p} = W/p$. Since $\pi_{0}$ and $\pi_{1}$ are fixed numbers in $(0,1)$, from the calculations above and assumptions (\At)-(\Ath), we have
$ \lim\inf_{p} \Exp_{\bZ}(V_{p}) \geq \nu_{0}$ and $\lim_{p} \Var_{\bZ}(V_{p}) =0$, where $\nu_0$ is as specified in (\Ath).
Therefore, 
$$ 0 \leq \Prob_{\bZ} (V_{p} \leq 0) \leq \Prob_{\bZ} \left(V_{p} - \Exp_{\bZ}(V_{p})  \leq - \nu_{0}/2 \right),$$
by taking $p\geq p_{\nu}$ with $p_{\nu}$ such that $\Exp(V_{p}) > \nu_{0}/2$ for all $p \geq p_{\nu}$. By Chebyshev's inequality, 
$$ 0 \leq \Prob \left( \left|V_{p} - \Exp_{\bZ}(V_{p}) \right|  \geq  \nu_{0}/2 \right) \leq \dfrac{4 \Var_{\bZ}(V_{p})}{\nu_{0} ^2} \to 0 \quad \text{as}~ p \to \infty.$$ 
Thus, the result follows. 
\end{proof}

\vskip5pt
\noindent The proof of Theorem \ref{thm:1} relies on the following lemmas, whose proofs are provided in the~Supplementary.

\vskip5pt

\begin{lemma}\label{lm:1}
Define $\Psi = \Exp_{R} \big[ R^{\top} \left( R \Sigma R^{\top} \right)^{-1} R \big]$, where $R^{d \times p}=\big(( r_{ij} ) \big)$ with $r_{ij} \stackrel{i.i.d}{\sim} N(0,1)$ for all $(i,j)$. Then,
$  \left\{ p \LM(\Sigma) \right\}^{-1} d\leq \Lm (\Psi) \leq  \LM (\Psi) \leq  \left\{ p \Lm(\Sigma) \right\}^{-1} d. $
\end{lemma}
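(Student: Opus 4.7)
The plan is to establish both bounds in the Loewner (positive semidefinite) order via operator monotonicity and then reduce the computation to the expectation of a random orthogonal projection. Starting from the trivial sandwich $\Lm(\Sigma) I_p \preceq \Sigma \preceq \LM(\Sigma) I_p$, conjugating by $R$ yields
$$\Lm(\Sigma)\, RR^{\top} \preceq R\Sigma R^{\top} \preceq \LM(\Sigma)\, RR^{\top},$$
where $RR^{\top}$ is a $d\times d$ matrix that is invertible almost surely (since $d\leq p$ and $R$ is Gaussian). Operator inversion reverses the order on positive definite matrices, so
$$\LM(\Sigma)^{-1} (RR^{\top})^{-1} \preceq (R\Sigma R^{\top})^{-1} \preceq \Lm(\Sigma)^{-1} (RR^{\top})^{-1}.$$
Pre- and post-multiplying by $R^{\top}$ and $R$ preserves the Loewner order, and then taking expectations over $R$ (which also preserves Loewner order) gives
$$\LM(\Sigma)^{-1}\, \Exp_{R}\bigl[ R^{\top}(RR^{\top})^{-1} R \bigr] \preceq \Psi \preceq \Lm(\Sigma)^{-1}\, \Exp_{R}\bigl[ R^{\top}(RR^{\top})^{-1} R \bigr].$$

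The remaining task is to identify $\Exp_{R}[R^{\top}(RR^{\top})^{-1} R]$ explicitly. The matrix $P \coloneqq R^{\top}(RR^{\top})^{-1} R$ is the orthogonal projection in $\mathbb{R}^p$ onto the row space of $R$: it is symmetric, idempotent, and almost surely of rank $d$. Since $R$ has i.i.d.\ $N(0,1)$ entries, its law is rotationally invariant: for any orthogonal $O \in \mathbb{R}^{p\times p}$, $RO \stackrel{d}{=} R$, which implies $O^{\top} P\, O \stackrel{d}{=} P$. Hence $\Exp_R[P]$ commutes with every orthogonal matrix, forcing $\Exp_R[P] = c\, I_p$ for some scalar $c \geq 0$. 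Taking trace,
$$cp = \Exp_R[\tr(P)] = \Exp_R\bigl[\tr\bigl((RR^{\top})^{-1} RR^{\top}\bigr)\bigr] = \tr(I_d) = d,$$
so $c = d/p$, i.e.\ $\Exp_R[P] = (d/p)\, I_p$.

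Substituting this identity into the Loewner sandwich above yields
$$\frac{d}{p\,\LM(\Sigma)}\, I_p \preceq \Psi \preceq \frac{d}{p\,\Lm(\Sigma)}\, I_p,$$
and reading off the extremal eigenvalues delivers the claim. There is no genuine obstacle here: the argument is entirely operator-monotonicity and symmetry based, requiring no concentration or perturbation analysis. The only point to watch is ensuring every inverse is well defined almost surely (true for $R\Sigma R^{\top}$ and $RR^{\top}$ under the Gaussian assumption with $d\leq p$) and that each Loewner manipulation is applied only to positive definite matrices, both of which are immediate under the stated hypotheses.
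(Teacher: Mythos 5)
Your proof is correct and uses the standard argument one would expect here: the Loewner sandwich $\Lm(\Sigma) RR^{\top} \preceq R\Sigma R^{\top} \preceq \LM(\Sigma) RR^{\top}$, antitonicity of the matrix inverse, congruence and expectation preserving the order, and the rotational-invariance identity $\Exp_R\bigl[R^{\top}(RR^{\top})^{-1}R\bigr] = (d/p) I_p$. All the side conditions (almost-sure invertibility of $RR^{\top}$ and $R\Sigma R^{\top}$, and integrability of $\Psi$, which follows from your own bound $R^{\top}(R\Sigma R^{\top})^{-1}R \preceq \Lm(\Sigma)^{-1} I_p$) are handled, so the argument is complete.
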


\begin{lemma}\label{lm:2}
Let assumptions (\At)-(\Ath) be true. 
Define $D_{k,\kp}^{\mathrm{RP}}(\bZ):= \Exp_R \big\{ D_{k,\kp}^{\mathrm{RPE}}(\bZ) \big\}$
for $k, \kp \in \{0,1\}$. If $\bZ \in P_{k}$ and $d = O\left(\log p\right)$ with $d \to \infty$, then with probability tending to \emph{one}, we have
$$ \liminf_{p} \left[ d^{-1} D_{k,\kp}^{\mathrm{RP}}(\bZ) - p^{-1} \mathrm{KL}_{k,\kp} \{ 1 + o(1)\} \right] \geq 0. $$
\end{lemma}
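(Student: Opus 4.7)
The plan is to decompose $D^{\mathrm{RP}}_{0,1}(\bZ) = \Exp_R D^R_{0,1}(\bZ)$ (writing $k=0$, $\kp=1$) into the three pieces obtained by pushing $\Exp_R$ through equation \eqref{eqn:RPDF}: an expected log-determinant $E := \Exp_R[\log\det(R\Sa R^\top) - \log\det(R\So R^\top)]$ and two expected quadratic forms $Q_j(\bZ) := (\bZ - \bmu_j)^\top \bar{\Psi}_j (\bZ - \bmu_j)$ with $\bar{\Psi}_j := \Exp_R[R^\top(R\Sigma_j R^\top)^{-1}R]$. For $\bZ \in P_0$ I will compare each piece, in expectation and variance over $\bZ$, with its counterpart in $2\mathrm{KL}_{0,1} = \tr(\Sa^{-1}\So) + (\bmu_0-\bmu_1)^\top\Sa^{-1}(\bmu_0-\bmu_1) - p + \log\det(\Sa) - \log\det(\So)$, and show that the RPE version matches the $(d/p)$-scaled KL divergence up to $o_P(d)$ fluctuations.

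The first step handles the fluctuation of $Q_j(\bZ)$ over the randomness of $\bZ$. Writing $\bZ = \bmu_0 + \So^{1/2}\epsilon$ with $\epsilon \sim N_p(0,I)$, and using the deterministic identity $\tr(\Sigma_k \Psi_k) = d$ inside the expectation, $\Exp_\bZ Q_0(\bZ) = \tr(\So\bar{\Psi}_0) = d$ exactly, while $\Exp_\bZ Q_1(\bZ) = \tr(\So \bar{\Psi}_1) + (\bmu_0 - \bmu_1)^\top \bar{\Psi}_1(\bmu_0 - \bmu_1)$. The operator-norm bound $\LM(\bar{\Psi}_j) \leq d/(p\gamma_j)$ from Lemma \ref{lm:1}, combined with (\At)--(\Ao), forces the Gaussian variance of each $Q_j$ over $\bZ$ to be $o(d^2)$ (after bounding $\tr((\bar{\Psi}_j \So)^2)$ and $\bmu_0^\top \bar{\Psi}_j \So \bar{\Psi}_j \bmu_0$), so Chebyshev gives $Q_j(\bZ) = \Exp_\bZ Q_j(\bZ) + o_P(d)$.

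The second and main step is to show that $\Exp_\bZ[2D^{\mathrm{RP}}_{0,1}(\bZ)] = (d/p)\cdot 2\mathrm{KL}_{0,1}(1+o(1))$. Using (\Ao)(b), decompose $\Sigma_j = \gamma_j I_p + T_j$ where $T_j \succeq 0$ has rank at most $m_{j,p} \lesssim p^\xi$, so $R\Sigma_j R^\top = \gamma_j RR^\top + RT_j R^\top$. The Davidson--Szarek concentration $\|RR^\top - pI_d\|_{\mathrm{op}} = o_P(p)$ for Gaussian $R$ with $d = O(\log p)$, together with Sherman--Morrison--Woodbury, expands $(R\Sigma_j R^\top)^{-1}$ as $(\gamma_j p)^{-1}I_d$ plus a rank-$m_{j,p}$ correction. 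Taking expectations yields $\tr(\So\bar{\Psi}_1) = (d/p)\tr(\Sa^{-1}\So)(1+o(1))$ and $(\bmu_0-\bmu_1)^\top\bar{\Psi}_1(\bmu_0-\bmu_1) = (d/p)(\bmu_0-\bmu_1)^\top \Sa^{-1}(\bmu_0-\bmu_1)(1+o(1))$, where (\At) plus $\log p \cdot \tr(T_j)/p \to 0$ from (\Ao)(b) absorbs the corrections into the $o(1)$. For $E$, use $\log\det(R\Sigma_j R^\top) = d\log\gamma_j + \log\det(RR^\top) + \log\det\bigl(I_d + \gamma_j^{-1}(RR^\top)^{-1}RT_jR^\top\bigr)$ together with the Wishart identity $\Exp\log\det(RR^\top) = \sum_{i=1}^d \psi((p-i+1)/2) + d\log 2 = d\log p + O(d^2/p) = d\log p + o(d)$; the $d\log p$ terms cancel across classes in $E$ and the rank-$m_{j,p}$ log-det contribution reduces, after expansion and use of $\log p \cdot \tr(T_j)/p \to 0$, to $(d/p)[\log\det(\Sa)-\log\det(\So)] + o(d)$. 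Summing with the $-d = -p \cdot (d/p)$ from $Q_0$ and combining with Step 1 gives $2 D^{\mathrm{RP}}_{0,1}(\bZ) = (d/p)\cdot 2\mathrm{KL}_{0,1}(1+o(1)) + o_P(d)$; dividing by $2d$ and using $\mathrm{KL}_{0,1}/p \geq \nu_0 > 0$ from (\Ath) to absorb the $O(1)$ prior contribution delivers the claimed liminf with probability tending to one.

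The main obstacle is the second step: the low-rank part $RT_jR^\top$ has rank up to $p^\xi$ and operator norm as large as $O(p^{1+\alpha})$, so crude bounds via rank times operator norm are useless. The analysis must exploit the integrated trace control $\log p \cdot \sum_{i \leq m_{j,p}} \lambda_i(\Sigma_j)/p \to 0$ from (\Ao)(b) to show that the cumulative Woodbury correction in both $\bar{\Psi}_j$ and the log-determinant is $o(d)$ after multiplication by $d/p$. The hypothesis $d = O(\log p)$ is equally essential: it keeps the Wishart remainder $O(d^2/p)$ in $\Exp\log\det(RR^\top)$ genuinely $o(d)$ and prevents the low-rank corrections from dominating the target $(d/p)\cdot \mathrm{KL}_{0,1}$.
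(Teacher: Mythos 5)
Your overall architecture (splitting $D^{\mathrm{RP}}_{0,1}$ into the log-determinant term and the two averaged quadratic forms, using the exact identity $\tr\big(\So\,\Exp_R[R^{\top}(R\So R^{\top})^{-1}R]\big)=d$, controlling the fluctuation over $\bZ$ by a Chebyshev bound with $\mathrm{Var}_{\bZ}=o(d^2)$, and handling the low-rank part of $\Sigma_j$ by Woodbury together with the trace budget $\log p\cdot\tr(T_j)/p\to 0$) is sound, and your treatments of the trace term, the log-determinant term and the $Q_0$ term are correct. However, your Step 2 asserts a two-sided equivalence that is false, and the failure occurs precisely in the mean-difference term. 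Take $\gamma_1=1$, $\Sa=I_p+p^{\alpha}u_1u_1^{\top}$, $\So=2I_p$, $\bmu_0-\bmu_1=\sqrt{p}\,u_1$; assumptions (\At)--(\Ath) all hold. Then $(d/p)\,(\bmu_0-\bmu_1)^{\top}\Sa^{-1}(\bmu_0-\bmu_1)=(d/p)\cdot p/(1+p^{\alpha})\asymp d\,p^{-\alpha}=o(d)$, whereas a direct Sherman--Morrison computation gives
\[
(\bmu_0-\bmu_1)^{\top}\bar{\Psi}_1(\bmu_0-\bmu_1)=p\,\Exp_R\!\left[\frac{s_R}{1+p^{\alpha}s_R}\right],\qquad s_R=u_1^{\top}R^{\top}(RR^{\top})^{-1}Ru_1\approx d/p ,
\]
which is $\asymp d$ because $p^{\alpha}d/p\to 0$. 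The two quantities differ by a factor of order $p^{\alpha}$, so the claimed relation $(\bmu_0-\bmu_1)^{\top}\bar{\Psi}_1(\bmu_0-\bmu_1)=(d/p)(\bmu_0-\bmu_1)^{\top}\Sa^{-1}(\bmu_0-\bmu_1)(1+o(1))$ is not true, and consequently neither is your final identity $2D^{\mathrm{RP}}_{0,1}(\bZ)=(d/p)\cdot 2\mathrm{KL}_{0,1}(1+o(1))+o_P(d)$: in this example the left side exceeds the right by an additive $d(1+o(1))$, which is of the same order as the target. The structural reason is that the random projection washes out the low-rank correction to $\gamma_1^{-1}I$ (since $\|Pu_i\|^2\approx d/p$ is tiny), while $\Sa^{-1}$ retains it in full; when $\bmu_0-\bmu_1$ is aligned with a spiked eigendirection this discrepancy is $O(d)$, not $o(d)$.

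The lemma itself survives because the discrepancy has the favorable sign: the mean term enters $D^{\mathrm{RP}}_{0,1}$ with a plus sign, and the RP version is the \emph{larger} of the two. So the repair is to prove only the one-sided bound $(\bmu_0-\bmu_1)^{\top}\bar{\Psi}_1(\bmu_0-\bmu_1)\geq (d/p)(\bmu_0-\bmu_1)^{\top}\Sa^{-1}(\bmu_0-\bmu_1)-o(d)$, which is consistent with the lemma's $\liminf\geq 0$ formulation; note that Lemma \ref{lm:1} alone does not suffice here (its lower bound $\Lm(\bar{\Psi}_1)\geq d/(p\LM(\Sa))$ loses a factor $p^{\alpha}$), so you need the representation $(\bmu_0-\bmu_1)^{\top}\bar{\Psi}_1(\bmu_0-\bmu_1)=\Exp\|P_{\tilde R}\Sa^{-1/2}(\bmu_0-\bmu_1)\|^2$ with $P_{\tilde R}$ the projection onto the span of $d$ i.i.d.\ $N_p(0,\Sa)$ vectors, together with the fact that the total expected projection mass absorbed by the $m_{1,p}$ spiked directions is $o(d)$ under (\Ao)(b), and rotational invariance of $\Exp[P_{\tilde R}]$ on the $\gamma_1$-eigenspace. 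Since the lemma as stated asks only for a lower bound, the paper's supplementary proof almost certainly proceeds one-sidedly throughout; your two-sided claim is strictly stronger than what is needed and, as shown, stronger than what is true.
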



\noindent Next, we prove Theorem \ref{thm:1}.
\begin{proof}[Proof of Theorem \ref{thm:1}]
Using Lemma \ref{lm:0}, it suffices to prove the theorem for a two class problem only. Denote the two populations by $P_{0}$ and $P_{1}$, and $D^{\text{RPE}}_{0, 1}$ and $D^{\text{RP}}_{0,1}$ as in equation (\ref{eqn:DRPE}) and in Lemma \ref{lm:3}, respectively, with $(\kp, k) = (0,1)$. For notational simplicity, we omit the suffix and write $D^{\text{RPE}}_{0,1}$ as $D^{\text{RPE}}$, $D^{\text{RP}}_{0,1}$ as $D^{\text{RP}}$ and so on. 

\vskip5pt
\noindent{\bf Part (a).} First, we verify that the strong law of large numbers (SLLN) holds for the sequence $\left\{ D^{R_{1}} ({\bZ}), D^{R_{2}} ({\bZ}), \ldots \right\}$ for a realization ${\bZ}$ with fixed dimension $p$. The answer is affirmative, since the random variables in the sequence are {i.i.d.} and the common expectation exists. Using Lemma \ref{lm:1} and assumptions (\At)-(\Ath), we get
   \begin{multline}
     \Exp_{R}  \left[ \left({\bZ} - \bmu_{1} \right)^{\top} R^{\top} \big( R \Sa R^{\top} \big)^{-1} R \left({\bZ} - \bmu_{1} \right) -  \left({\bZ} - \bmu_{0} \right)^{\top} R^{\top} \big( R \So R^{\top} \big)^{-1} R \left({\bZ} - \bmu_{0} \right) \right] \\
       \leq p^{-1} d \left( \gamma_{1}^{-1} \left\| {\bZ} - \bmu_{1} \right\|^{2} - p^{-\alpha} \left\| {\bZ} - \bmu_{0} \right\|^{2} \right) . \notag
   \end{multline}  
Further, as $RR^{\top} \sim \mathrm{Wishart}(I_p, p)$, $\Exp_{R} \left[ \log  \det \left(R  R^{\top} \right) \right] = \sum_{j=1}^{d} \Exp_{R} \left(\log U_{j}\right)$, where $U_{j} \sim \chi^{2}_{p - j + 1}$ (which is finite) for $j=1,\ldots,p$. Thus,
\begin{multline}
    \Exp_{R} \left[ \log  \det \big( R \So R^{\top} \big) - \log \det \big(R \Sa R^{\top} \big)  \right] \\
    \leq d \left\{  \log \LM(\So) - \log \Lm(\Sa)  \right\} \leq  d\left\{ \alpha \log p - \log \gamma_{1} \right\}.\notag 
\end{multline}
Similarly, one can provide a finite lower bound to the expectation above.
Thus, the expectation of $D^{R_b} ({\bf z})$ is finite and SLLN is applicable. The almost sure limit is $D^{\text{RP}}({\bZ}) \coloneqq \Exp_{R_{b}} \left[D^{R_b} ({\bZ}) \right]$.

\vskip5pt
\noindent{\bf Part (b).}  The proof of this part follows directly from Lemma \ref{lm:2}.
    
\vskip5pt
\noindent{\bf Part (c).} 
By part (a), we have 
$$ \lim_{B\to \infty} \Prob_{\bf R} \left( \left| D^{\text{RPE}}({\bZ}) - D^{\text{RP}}({\bZ})\right| > \nu_{0} /4 \right) = 0. $$
This implies that for each ${\bf z}$, the indicator function 
$h_{\bf R}({\bf z}) = \mathbb{I} \left( \left| D^{\text{RPE}}({\bf z}) - D^{\text{RP}}({\bf z}) \right| > \nu_{0} /4 \right) $ converges in probability to $0$ in ${\bf R} := (R_{1}, \ldots, R_{B})$.
For each fixed $p$, using the dominated convergence theorem (DCT), we have
\begin{eqnarray*}
\lim_{B\to \infty} \Prob_{\bZ,{\bf R}} \left( \left| D^{\text{RPE}}(\bZ) - D^{\text{RP}}(\bZ)\right| > \nu_{0}/4 \right) 
&=& \lim_{B\to \infty} \Exp_{\bZ,{\bf R}}\left(  h_{\bf R}(\bZ) \right) \\
&=& \Exp_{\bZ} \left[\lim_{B\to \infty} \Exp_{{\bf R} \mid \bZ}\left(  h_{\bf R}(\bZ) \right) \right]= 0. 
\end{eqnarray*}
Next, we will consider the decision rule given by $D^{\text{RP}}(\bZ)$. By Part (b) and (\Ath) we have 
\begin{eqnarray}
    2 d^{-1}  D^{\text{RP}}(\bZ) \geq  2 p^{-1} \text{KL}_{0,1} \left\{1+o(1)\right\} + o_{p}(1)
     \geq \nu_{0} + o(1)  > 2^{-1} \nu_{0}. \label{eqn:lm3}
\end{eqnarray}   
 Let $\Delta^{\text{RP}}$ denote the misclassification probability obtained by replacing $\delta$ with $\delta^{\text{RP}}$ in equation (\ref{eqn:MP_population}), where $\delta^{\text{RP}}$ is the classifier corresponding to the decision rule $D^{\text{RP}}$. 
From (\ref{eqn:lm3}), we have 
\begin{eqnarray}
\Prob_{\bZ} \left(D^{\text{RP}} (\bZ) \leq 0 \mid \bZ \in P_{0} \right) \wedge   \Prob_{\bZ} \left(D^{\text{RP}} (\bZ) > 0 \mid \bZ \in P_{1} \right) \to 0\label{eqn:lm2}
\end{eqnarray}
as $d \to \infty$ (or, $p \to \infty$).
Now, by denoting $\Prob_{\bZ, \bm{R}}$ as $\Prob$, we have
\begin{multline}
    \Prob \left(D^{\text{RPE}} (\bZ) \leq 0 \mid \bZ \in P_{0} \right) \leq  \Prob \left(
   D^{\text{RPE}} (\bZ) \leq 0 \mid \left|D^{\text{RP}} (\bZ) - D^{\text{RPE}} (\bZ) \right| < \nu_{0}/ 4 , ~ \bZ \in P_{0}  \right) \notag \\
    \qquad \qquad +~ \Prob \left( \left| D^{\text{RPE}}(\bZ) - D^{\text{RP}}(\bZ)\right| > \nu_{0}/4 \right) \notag \\
 \leq  \Prob \left(
   D^{\text{RP}} (\bZ) < \nu_{0} /4 \mid  \bZ \in P_{0}  \right)
   + \Prob \left( \left| D^{\text{RPE}}(\bZ) - D^{\text{RP}}(\bZ)\right| > \nu_{0} /4 \right). \notag
\end{multline}
Taking limit over $B$ first, we get
$$ \lim_{B\to \infty} \Prob \left(D^{\text{RPE}} (\bZ) \leq 0 \mid \bZ \in P_{0} \right) \leq 
\Prob \left(   D^{\text{RP}} (\bZ) <  \nu_{0} /4 \mid  \bZ \in P_{0}  \right)$$
almost surely in $\bZ$. From equation (\ref{eqn:lm2}), the probability stated above converges to $0$ as $p\to \infty$. 

Proceeding similarly, one can also show that
$  \lim_{p\to \infty} \lim_{B\to \infty} \Prob \left(D^{\text{RPE}} (\bZ) \geq 0 \mid \bZ \in P_{1} \right) = 0.$
Combining the above two results, we get $\Delta^{\text{RPE}}\xrightarrow{} 0$ as $B\to \infty$ and $p\to \infty$.
\end{proof}

\vskip5pt
The proof of Theorem \ref{thm:2} is based on the following lemma, whose proof is deferred to the Supplementary. Let the sample sizes of populations $P_0$ and $P_1$ be $n$ and $m$, respectively.

\begin{lemma}\label{lm:3}
Let assumptions (\At)-(\Af) be true.  Define $\delhatrp (\bZ) =  \Exp_{R} \left[\widehat{D}^{\text{RPE}}_{n}(\bZ) \right]$, where $\widehat{D}^{\text{RPE}}_{n}(\bZ)$ is obtained from $D^{\text{RPE}}(\bZ)$ (see (\ref{eqn:DRPE})) by substituting $(\pi_{k} , \bmu_{k}, \Sigma_{k})$ with $(\widehat{\pi}_{k} , \widehat{\bmu}_{k}, \widehat{\Sigma}_{k})$ for $k = 0,1$ as defined in (\ref{eqn:estimates}). For both cases (a) $\bZ \in P_0$ and (b) $\bZ \in P_1$, we have
    $$d^{-1} \left\{ \delhatrp (\bZ) - D^{\text{RP}}(\bZ) \right\} \xrightarrow{P} 0 \quad \text{as } \min\{m, n, p\} \to \infty.$$ 
\end{lemma}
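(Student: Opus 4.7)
The plan is to decompose $\delhatrp(\bZ) - D^{\mathrm{RP}}(\bZ)$ by swapping each population parameter for its sample counterpart one at a time and showing that each resulting piece is $o_p(d)$. Writing $\Xi_{k}:=\Exp_R[R^{\top}(R\Sk R^{\top})^{-1}R]$ and $\widehat{\Xi}_{k}:=\Exp_R[R^{\top}(R\Skh R^{\top})^{-1}R]$ for $k\in\{0,1\}$, the difference $\delhatrp(\bZ) - D^{\mathrm{RP}}(\bZ)$ decomposes into a log-prior term $\log(\hat\pi_0/\hat\pi_1) - \log(\pi_0/\pi_1)$, two log-determinant terms $\Exp_R[\log\det(R\Skh R^{\top}) - \log\det(R\Sk R^{\top})]$ (one per class), and two quadratic-form terms $(\bZ-\bmuh_k)^{\top}\widehat{\Xi}_k(\bZ-\bmuh_k) - (\bZ-\bmu_k)^{\top}\Xi_k(\bZ-\bmu_k)$ (one per class).

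The prior term is $O_p(n^{-1/2})=o_p(d)$ by the LLN and (\Af). For each log-determinant term, I would condition on $R$: the matrix $M_k(R):=(R\Sk R^{\top})^{-1/2}(R\Skh R^{\top})(R\Sk R^{\top})^{-1/2}$ has conditional law $(n_k-1)^{-1}W_d(n_k-1,I_d)$, which is free of $R$. Since $\log\det(R\Skh R^{\top}) - \log\det(R\Sk R^{\top}) = \log\det M_k(R)$, the Bartlett decomposition $\log\det W_d(n_k-1,I_d) = \sum_{j=1}^{d}\log\chi^{2}_{n_k-j}$ yields mean $O(d^{2}/n_k)$ and variance $O(d/n_k)$; integrating over $R$ and applying Chebyshev gives a term of order $O_p(d^{2}/n_k + \sqrt{d/n_k}) = o_p(d)$ under (\Af).

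For each quadratic-form term, expanding $\bZ-\bmuh_k = (\bZ-\bmu_k) - (\bmuh_k-\bmu_k)$ splits it into a pure covariance-perturbation piece $Q^{(a)}_k := (\bZ-\bmu_k)^{\top}(\widehat{\Xi}_k-\Xi_k)(\bZ-\bmu_k)$, a cross piece $Q^{(b)}_k := -2(\bZ-\bmu_k)^{\top}\widehat{\Xi}_k(\bmuh_k-\bmu_k)$, and a pure mean-perturbation piece $Q^{(c)}_k := (\bmuh_k-\bmu_k)^{\top}\widehat{\Xi}_k(\bmuh_k-\bmu_k)$. The pivotal step is to establish the bilinear bound $|v^{\top}(\widehat{\Xi}_k-\Xi_k)v| \lesssim (d\|v\|^{2}/(p\gamma_k))\sqrt{d/n_k}$ with probability $1-o(1)$, for each deterministic $v\in\mathbb{R}^{p}$. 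To prove it, for each $R$ write $v^{\top}[R^{\top}(R\Skh R^{\top})^{-1}R - R^{\top}(R\Sk R^{\top})^{-1}R]v = w_R^{\top}(M_k(R)^{-1}-I)w_R$ with $w_R := (R\Sk R^{\top})^{-1/2}Rv$; apply Davidson--Szarek Wishart concentration $\|M_k(R)^{-1}-I\|_{\mathrm{op}}=O_p(\sqrt{d/n_k})$ uniformly in $R$ (the conditional law of $M_k(R)$ being $R$-free); on the $R$-event $\Lm(R\Sk R^{\top})\gtrsim p\gamma_k$---which has $R$-probability $1-o(1)$ by the standard Wishart bound $\Lm(RR^{\top})\gtrsim p$ combined with (\Ao)---bound $\|w_R\|^{2}\le\|Rv\|^{2}/\Lm(R\Sk R^{\top})\lesssim \|Rv\|^{2}/(p\gamma_k)$; and integrate over $R$ using $\Exp_R\|Rv\|^{2}=d\|v\|^{2}$ via Fubini. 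With this estimate, $Q^{(a)}_k$ is handled by taking $v=\bZ-\bmu_k$ with $\|\bZ-\bmu_k\|^{2}=O_p(p)$ (from (\At), (\Ao) and Gaussianity of $\bZ$), yielding $O_p(d\sqrt{d/n_k})=o_p(d)$. The estimate also gives $\|\widehat{\Xi}_k\|_{\mathrm{op}}\le\|\Xi_k\|_{\mathrm{op}}+\|\widehat{\Xi}_k-\Xi_k\|_{\mathrm{op}}\lesssim d/(p\gamma_k)$ via Lemma~\ref{lm:1}; combined with $\|\bmuh_k-\bmu_k\|^{2}=O_p(\tr(\Sk)/n_k)=O_p(p/n_k)$ from Gaussianity of $\bmuh_k-\bmu_k$, this gives $Q^{(c)}_k=O_p(d/n_k)=o_p(d)$. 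Finally, Cauchy--Schwarz in the $\widehat{\Xi}_k$-inner product yields $(Q^{(b)}_k)^{2}\le 4(\bZ-\bmu_k)^{\top}\widehat{\Xi}_k(\bZ-\bmu_k)\cdot Q^{(c)}_k=O_p(d)\cdot o_p(d)$, so $Q^{(b)}_k=o_p(d)$. Summing all blocks and dividing by $d$ completes the proof.

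The main obstacle is the bilinear estimate on $\widehat{\Xi}_k-\Xi_k$: it forces a joint use of Lemma~\ref{lm:1}-style eigenvalue control of $R\Sk R^{\top}$ on the $R$-side together with uniform-in-$R$ Wishart concentration for the $(X\mid R)$-perturbation $M_k(R)-I$. Converting these conditional high-probability bounds into an unconditional probability bound on the $R$-integrated operator $\widehat{\Xi}_k-\Xi_k$ via Fubini is the subtle step; once it is in place, the remaining work reduces to routine Chebyshev, Cauchy--Schwarz, and Gaussian concentration estimates.
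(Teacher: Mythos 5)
The paper defers its own proof of Lemma~\ref{lm:3} to the Supplementary, so I assess your argument on its own terms. Your overall architecture is sound and natural: the term-by-term decomposition (prior, two log-determinants, two quadratic forms), the observation that $M_k(R)=(R\Sk R^{\top})^{-1/2}(R\Skh R^{\top})(R\Sk R^{\top})^{-1/2}$ has a conditional law $(n_k-1)^{-1}W_d(n_k-1,I_d)$ that is free of $R$, the Bartlett decomposition for the log-determinant, and the use of Lemma~\ref{lm:1} to control $\Exp_R\|w_R\|^{2}=v^{\top}\Xi_k v\leq d\|v\|^{2}/\{p\Lm(\Sk)\}$ are all correct and do the real work. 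The orders you obtain, $O(d^{2}/n_k)+O_p(\sqrt{d/n_k})$ for the log-determinant block and $o_p(d)$ for each quadratic-form block, are consistent with (\Af).

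There is, however, one step that is false as literally written: the ``uniform-in-$R$'' Davidson--Szarek bound $\|M_k(R)^{-1}-I\|_{\mathrm{op}}=O_p(\sqrt{d/n_k})$ cannot hold simultaneously over the continuum of $R$'s on a single high-probability event --- for fixed training data, $\sup_R\|M_k(R)^{-1}\|_{\mathrm{op}}=\infty$ in general, since $\lambda_{\min}(R\Skh R^{\top})$ can be made arbitrarily small relative to $\lambda_{\min}(R\Sk R^{\top})$ by suitable choices of $R$. The argument is salvageable, but the correct route is through expectations rather than uniform events: bound $\Exp_{X}\big|v^{\top}(\widehat{\Xi}_k-\Xi_k)v\big|\leq\Exp_R\big[\|w_R\|^{2}\,\Exp_X\{\|M_k(R)^{-1}-I\|_{\mathrm{op}}\mid R\}\big]$, note that the inner conditional expectation is a constant $O(\sqrt{d/n_k})$ (again because the conditional law is $R$-free, with finiteness requiring $n_k-d\to\infty$, guaranteed by $d=o(n_{\min})$), and finish with Markov's inequality. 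Two smaller repairs: (i) the claimed operator-norm bound $\|\widehat{\Xi}_k\|_{\mathrm{op}}\lesssim d/(p\gamma_k)$ does not follow from a fixed-$v$ bilinear estimate; instead, apply the bilinear bound at the specific vector $v=\bmuh_k-\bmu_k$, which is legitimate because $\bmuh_k$ and $\Skh$ are independent for Gaussian samples, so one may condition on $\bmuh_k$; (ii) the prior term is deterministic given the design ($\widehat{\pi}_k=n_k/n$), so it is simply $O(1)=o(d)$ --- no law of large numbers is needed. With these corrections the proof goes through.
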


\noindent Finally, we prove Theorem \ref{thm:2} below.

\vskip5pt
\begin{proof}[Proof of Theorem \ref{thm:2}]

Following Lemma \ref{lm:0} and under a balanced design, it is sufficient to establish the theorem for a two class problem by denoting the populations as $P_{0}$ and $P_{1}$.
The key quantity here is $\widehat{D}^{\text{RPE}}_{0, 1}$ which is the estimated version of $D^{\text{RPE}}_{0, 1}$ (see (\ref{eqn:DRPE})) with $(\kp, k) = (0,1)$. For notational simplicity, we omit the suffix and henceforth, write $\hat{D}^{RPE}_{0,1}$ as $\hat{D}^{RPE}$.

Let $\bZ \in P_{0}$. Denote the samples of two populations as $X_{n} = [ \bX_{1},\ldots,\bX_{n}]^{\top}$ and $Y_{m} = [\bY_{1},\ldots,\bY_{m}]^{\top}$.
Consider the following
  \begin{multline}
      \Prob \left( \delhatrpe (\bZ)  \leq 0 \mid \bZ \in P_{0} \right) 
      \leq \Prob \left( \frac{1}{d}\left| \delhatrpe (\bZ) - D^{\text{RPE}} (\bZ) \right|  > \frac{\nu_{0}}{8} \mid \bZ \in P_{0} \right) \\ 
      + \Prob \left( \frac{1}{d}\left| D^{\text{RP}} (\bZ) - D^{\text{RPE}} (\bZ) \right|  > \frac{\nu_{0}}{8} \mid \bZ \in P_{0} \right)
      +  \Prob \left( \frac{D^{\text{RP}} (\bZ)}{d} < \frac{\nu_{0}}{8} \mid \bZ \in P_{0} \right), \label{eqn:thm3_1}
  \end{multline}
  where the probabilities are taken with respect to the joint distribution of $(\bZ\in P_{0}, X_n \in P_{0}, Y_m \in P_{1}, {\bf R} )$.
Using the proofs of Theorem \ref{thm:1} and Lemma \ref{lm:2}, we can show that the last two terms in (\ref{eqn:thm3_1}) converge to \emph{zero} as $B\to \infty$, followed by $p\to \infty$.

The random variable in the first component of equation (\ref{eqn:thm3_1}) can be expressed as
\begin{multline}
     \delhatrpe (\bZ) - D^{\text{RPE}} (\bZ) \\
     = \left\{ \delhatrpe (\bZ) - \delhatrp (\bZ) \right\} + \left\{\delhatrp (\bZ) -D^{\text{RP}} (\bZ) \right\} 
     + \left\{D^{\text{RP}} (\bZ) -D^{\text{RPE}} (\bZ) \right\}. \notag
\end{multline}
We denote $\mathbb{P}_{\bZ, X_n, Y_m, {\bf R}}$ as $\mathbb{P}$. 
The first term in equation (\ref{eqn:thm3_1}) can be bounded above by
\begin{multline}
\Prob \left( \frac{1}{d}\left| \delhatrpe (\bZ) - \delhatrp (\bZ) \right|  > \frac{\nu_{0}}{24} \mid \bZ \in P_{0} \right) + \Prob \left( \frac{1}{d}\left| \delhatrp (\bZ) - D^{\text{RP}} (\bZ) \right|  > \frac{\nu_{0}}{24} \mid \bZ \in P_{0} \right)  \\
+ \Prob \left( \frac{1}{d}\left| D^{\text{RPE}} (\bZ) - D^{\text{RP}} (\bZ) \right|  > \frac{\nu_{0}}{24} \mid \bZ \in P_{0} \right).
    \label{eqn:thm3_2}
\end{multline}
The last term in (\ref{eqn:thm3_2}) is similar to the second term in (\ref{eqn:thm3_1}), and can be shown to converge to zero as in the proof of Theorem \ref{thm:1}.

Similar to the proof of Theorem \ref{thm:2}, it can be shown that conditional on $(\bZ, X_n, Y_m)$, we~get
$$ \lim_{B\to \infty} \Prob_{{\bf R}|\bZ, X_n, Y_m} \left( \frac{1}{d}\left| \delhatrpe (\bZ) - \delhatrp (\bZ) \right|  > \frac{\nu_{0}}{24} \right) = 0. $$
Therefore, the function $g_{\bf R} (\bZ, X_n, Y_m) = \mathbb{I} \left( d^{-1} \left| \delhatrpe (\bZ) - \delhatrp (\bZ) \right|  > \nu_{0}/ 24 \right)$ converges in probability to $0$ (with respect to the distribution of ${\bf R}$ given $(\bZ, X_n, Y_m)$). Using DCT (for each fixed $p$), we have
\begin{multline}
\lim_{B\to \infty} \Prob_{(\bZ, X_n, Y_m, {\bf R})} \left( \left| \delhatrpe (\bZ) - \delhatrp (\bZ) \right| > \nu_{0}/24 \right) 
= \lim_{B\to \infty} \Exp_{(\bZ, X_n, Y_m, {\bf R})}\left[  g_{{\bf R}}(\bZ, X_n, Y_m) \right] \\
= \Exp_{\bZ, X_n, Y_m} \left[\lim_{B\to \infty} \Exp_{({\bf R} \mid \bZ, X_n, Y_m)}\left[ g_{{\bf R}} (\bZ, X_n, Y_m) \right] \right]= 0. \notag
\end{multline}
Thus, the limit with respect to $p$ is zero as well. 

Finally, we consider the middle term in (\ref{eqn:thm3_2}). Using Lemma \ref{lm:3}, this probability also converges to zero as $n \to \infty$. By equation (\ref{eqn:thm3_1}), the result now follows when $\bZ\in P_{0}$. 

Similarly, we can show that $\Prob_{(\bZ, X_n, Y_m, {\bf R})} \left( \delhatrpe (\bZ)  \geq 0 \mid \bZ \in P_{1} \right) \to 0$ as $\min\{n,p,B\} \to \infty$. These two results together prove $\Delta_n^{\text{RPE-QDA}} \to 0$ as $\min\{n,p,B\} \to \infty$.
\end{proof}

\subsection{Additional Details Related to Section \ref{sec:realdata-analysis}} \label{subsec:App-Numeric}

In this sub-section, we describe the quantities depicted in Figure \ref{fig2:KL} of Section \ref{sec:realdata-analysis}. 
The heatmaps demonstrate the reason behind the good performance of RPE-QDA methods in the gene expression data sets. One of the key assumptions of our theory is the separability assumption (\Ath), which requires the KL divergence between two competing populations be at least of the order of $p$.
Towards that, we first obtain a lower bound of the KL divergence and then estimate it using the training data, as described below.

The KL divergence between any two populations $P_k$ and $P_{\kp}$ for any $k, \kp \in \{1, \ldots, J\}$ is 
\begin{eqnarray}
   2\mathrm{KL}_{k, \kp} &=& tr(\Sk^{-1}\Sigma_{\kp}) + (\bmu_{k} - \bmu_{\kp})\Sk^{-1}(\bmu_k - \bmu_{\kp}) - p + \log\det(\Sk \Sigma_{\kp}^{-1}) \notag \\
   & \geq & \sum_{j=1}^{p} \left\{ \lambda_{j}  (\Sk^{-1}\Sigma_{\kp}) - \log \lambda_{j} (\Sk^{-1}\Sigma_{\kp}) - 1 \right\}  + (\bmu_{k} - \bmu_{\kp})(I_p + \Sk)^{-1}(\bmu_k - \bmu_{\kp}) \notag\\
   & \geq &  (\bmu_{k} - \bmu_{\kp})(I_p + \Sk)^{-1}(\bmu_k - \bmu_{\kp}), \label{eqn:KLLB}
\end{eqnarray}
where $\lambda_j(A)$ denotes the $j^{\text{th}}$ highest eigenvalue of the matrix $A$ for $j=1,\ldots,p$. Here, the first inequality stems from the fact that $(I+ \Sk)^{-1} \geq \Sk^{-1}$ (i.e., ${\bf y}^{\top} (I+ \Sk)^{-1} {\bf y} \leq  {\bf y}^{\top} \Sk^{-1} {\bf y}$ for any ${\bf y}$) and the second inequality is due to the fact that $h(x) = x - \log x - 1 \geq 0$ for all~$x>0$.

Define the quantity in equation (\ref{eqn:KLLB}) as $2\theta_{k, \kp}$, and estimate it using the sample means and covariance matrices. The sample estimate of $\theta_{k,\kp}$, denoted by $\widehat{\theta}_{k,\kp}$, is
 $ \widehat{\theta}_{k, \kp} =   (\widehat{\bmu}_{k} - \widehat{\bmu}_{\kp})(I + \widehat{\Sigma}_{k})^{-1}(\widehat{\bmu}_{k} - \widehat{\bmu}_{\kp}) / 2$,
where $\widehat{\bmu}_{k}$ and $\widehat{\Sigma}_{k}$ are given by equation (\ref{eqn:estimates}).
In Figure \ref{fig2:KL}, we present the values of $\log \big(\widehat{\theta}_{k, \kp}/ p \big)$ for all pairs $(k, \kp)$. 
Clearly, a high positive value of this quantity will indicate a high value for~$\mathrm{KL}_{k, \kp}/p$.
This in turn illustrates the validation of assumption (\Ath).

\vskip5pt

\noindent{\bf Supplementary.}
The Supplementary material (available \href{https://www.dropbox.com/scl/fi/eu7zojoxj48i82xx8gnhj/Supplementary.pdf?rlkey=mfvbzn12wncn0saqk8eanwpej&st=21t6gs39&dl=0}{here}) contains the proofs of Lemmas \ref{lm:0}-\ref{lm:3}. 

\newpage
\bibliography{manuscript}
\bibliographystyle{apalike}


\end{document}